\let\accentvec\vec
\let\vec\accentvec
 \newcommand\ForAuthors[1]
\newcommand{\comment}[1]{}
\begin{document}
\title{Polynomial Template Generation using Sum-of-Squares Programming}
\author{Assal\'e Adj\'e\inst{1}$^{,a}$ and Victor Magron\inst{2}$^{,b}$}
\institute{
Onera, the French Aerospace Lab, France.\\
Universit\'e de Toulouse, F-31400 Toulouse, France.
\\
\email{assale.adje@onera.fr}
\and
Circuits and Systems Group, Department of Electrical and Electronic Engineering,
Imperial College London, South Kensington Campus, London SW7 2AZ, UK.\\
\email{v.magron@imperial.ac.uk}
}
\renewcommand{\thefootnote}{\alph{footnote}}
\footnotetext[1]{The author is supported by the RTRA /STAE Project BRIEFCASE and the ANR ASTRID
VORACE Project.}
\footnotetext[2]{The author is supported by EPSRC (EP/I020457/1) Challenging Engineering Grant.}
\maketitle
\newcommand{\K}{\mathcal K}
\newcommand{\nn}{\mathbb N}
\newcommand{\rr}{\mathbb R}
\newcommand{\rd}{\rr^d}
\newcommand{\br}{\overline{\mathbb R}}
\newcommand{\FR}{\mathbf{F}\left(\rd,\rr\right)}
\newcommand{\F}{\mathbf{F}}
\newcommand{\fr}{\mathbf{F}\left(\pp,\br\right)}
\newcommand{\FRP}{\mathbf{F}\left(\rd,\rr_+\right)}
\def\norm#1{\mbox{$\| #1 \|$}}
\def\affect{\mathbb{A}}
\def\inter{\mathbb{I}}
\newcommand{\st}{\operatorname{s.t.}}
\newcommand{\pp}{\mathbb P}
\newcommand{\frr}{\mathbf{F}\left(\pp,\rr\right)}
\def\dual#1#2{\langle #1,#2\rangle}
\newcommand{\Id}{\operatorname{Id}}
\newcommand{\rel}[1]{#1^{\mathcal R}}
\newcommand{\sha}[1]{#1^{\sharp}}
\newcommand{\frp}{\mathbf{F}\left(\pp,\rr_{+}\right)}
\newcommand{\ve}{\operatorname{Vex_{\pp}}}
\newcommand{\vep}{\ve(\pp\mapsto\br)}
\newcommand{\vepr}{\ve(\pp\mapsto\rr)}
\newcommand{\ved}{\ve(\rd)}
\newcommand{\clo}[1]{\overline{#1}}
\newcommand{\mybrackets}[1]{\big(#1\big)}
\newcommand{\dotimes}[2]{\underset{\mbox{\rm {\scriptsize #1}}}{\overset{\mbox{\rm {\scriptsize #2}}}
{\mathlarger{\mathlarger{\otimes}}}}}
\newcommand{\Min}{\operatorname{Min}}
\newcommand{\Mk}{\rr_+^{k\times k}}
\newcommand{\well}{well-representative }
\newcommand{\prop}[1]{\mathcal{P}_{#1}}
\newcommand{\feas}[1]{\mathcal{F}\left(#1\right)}
\newcommand{\SOS}{\operatorname{SOS}}
\newcommand{\brev}{\color{red}}
\newcommand{\erev}{\color{black}}
\newcommand{\pws}{\mathcal{S}}
\def\sizefig{0.35}
\def\sizesmallfig{0.27}
\def\sizetinyfig{0.24}
\newcommand{\Sw}{\mathrm{Sw}}
\newcommand{\In}{\mathrm{In}}
\newcommand{\rea}{\mathcal{R}}
\newcommand{\inset}{\mathcal{S}}
\newcommand{\state}{\mathcal{X}}
\newcommand{\laws}{\mathcal{L}}
\newcommand{\xin}{X^{\mathrm{in}}}
\newcommand{\ind}{\mathcal{I}}
\newcommand{\cont}{\mathcal{C}}
\begin{abstract}
Template abstract domains allow to express more interesting properties than classical abstract domains.
However, template generation is a challenging problem when one uses template abstract domains for program analysis. 
In this paper, we relate template generation with the program properties that we want to prove. We focus on one-loop programs with nested conditional branches. We formally define the notion of well-representative template basis with respect to such programs and a given property. The definition relies on the fact that template abstract domains produce inductive invariants. We show that these invariants can be obtained by solving certain systems of functional inequalities. Then, such systems can be strengthened using a hierarchy of sum-of-squares (SOS) problems when we consider programs written in polynomial arithmetic. Each step of the SOS hierarchy can possibly provide a solution which in turn yields an invariant together with a certificate that the desired property holds. The interest of this approach is illustrated on nontrivial program examples in polynomial arithmetic.

\keywords{static analysis, abstract interpretation, template abstract domains, sum-of-squares 
programming, piecewise discrete-time polynomial systems}
\end{abstract}
\section{Introduction}
The concept of templates was introduced in a linear setting. They answered to the computational
issue of the polyhedra domain, that is, the number of faces and the number of vertices both explode when performing the code analysis. Recently, generalizations of linear templates appeared, such as 
quadratic Lyapunov functions as nonlinear templates. Nevertheless, no precise characterization of the 
templates to use have been developed for program analysis purpose. Indeed, depending on the property to show, prefixing 
a template basis without any rules can lead to unuseful information on the programs. 
For instance, suppose that we want to show that the values taken by the variables of the program
are bounded. Then, it is natural to use intervals or norm functions as templates. Unfortunately,
these functions are not sufficient to show the desired property. 
In the context of linear systems in optimal control, it is well known that Lyapunov functions 
provide useful templates to bound the variable values. This result can be extended to polynomial 
systems using polynomial Lyapunov functions. The crucial notion behind is 
that these polynomial functions allow to define sublevel sets which are invariant by the dynamics -in our case, 
the dynamics being the loop body. In static analysis, Lyapunov functions 
provide inductive invariants, which are precisely the results of computation while using template abstract domains. 
\paragraph{Related works.}Template domains were introduced by Sankaranarayanan et al.~\cite{Sriram1}, see also~\cite{Sriram2}.
The latter authors only considered a finite set of linear templates and did not provide an automatic method 
to generate templates. 
Linear template domains were generalized to nonlinear quadratic cases by Adj\'e et 
al. in~\cite{DBLP:journals/corr/abs-1111-5223,DBLP:conf/esop/AdjeGG10}, where the authors used 
in practice quadratic Lyapunov templates for affine arithmetic programs. These templates are again not automatically 
generated. Roux et al.~\cite{DBLP:conf/hybrid/RouxJGF12} provide an automatic method to compute floating-point
certified Lyapunov functions of perturbed affine loop body updates. They use Lyapunov functions with 
squares of coordinate functions as quadratic template bases in case of single loop programs written in 
affine arithmetic. The extension proposed in~\cite{AGMW13cicm,AGMW14nltemplates} relies on combining polynomial 
templates with sum-of-squares (SOS) techniques to certify nonlinear inequalities. 

Proving polynomial inequalities 
is already NP-hard and boils down to show that the infimum of a given polynomial is positive. 
However, one can obtain lower bounds of the infimum by solving a hierarchy of Moment-SOS relaxations, 
introduced by Lasserre in~\cite{Las01moments}. Recent advances in SOS optimization allowed to extensively apply 
these relaxations to various fields, including parametric polynomial optimization, optimal control, combinatorial 
optimization, {\em etc.} (see e.g.~\cite{Parrilo2003relax,LaurentSurvey} for more details). 
In the context of 
hybrid systems, certified inductive invariants can be computed by using SOS approximations of parametric polynomial 
optimization problems~\cite{Lin14hybrid}.  In~\cite{Prajna04hybrid}, the authors develop an SOS-based methodology to certify that the trajectories of hybrid systems avoid an unsafe region.
Recently, Ahmadi et al.~\cite{ahmad13switched} 
investigate necessary or sufficient conditions for SOS-convex Lyapunov functions to stabilize switched systems, 
either in the linear case or when the switched system is the convex hull of a finite number of nonlinear criteria.

In a static analysis context, polynomial invariants appear in~\cite{BagnaraR-CZ05}, where 
invariants are given by polynomial inequalities (of bounded degree) but the method relies on a reduction to 
linear inequalities (the polyhedra domain). Template polyhedra
domains allow to analyze reachability for polynomial systems: 
in~\cite{sassi2012reachability}, the authors propose a method that computes linear templates to improve the accuracy 
of reachable set approximations, whereas the procedure in~\cite{dang2012reachability} relies on Bernstein polynomials and linear programming, with linear templates being fixed in advance. Bernstein polynomials also appear in ~\cite{polynomial_template_domain} as template polynomials but there are not generated automatically. In~\cite{gulwani}, the authors use SMT-based techniques to automatically generate templates which are defined as formulas built with arbitrary logical structures and predicate conjunctions.
Other reductions to systems of polynomial {\em equalities} (by contrast with polynomial inequalities, as we consider
here) were studied in \cite{Muller,Kapur} and more recently in~\cite{cachera2014inference}.  
\paragraph{Contribution and methodology.} 
In this paper, we generate polynomial templates by combining the approach of SOS approximations extensively used in control theory with template abstract domains
originally introduced in static analysis.
We focus on analyzing programs composed of a single loop with polynomial conditional branches in the loop body and polynomial assignments. For such programs, our method consists in computing certificates which yield sufficient conditions that a given property holds. We introduce the notion of {\it well-representative} templates with respect to this property. 
Computing inductive invariant  and polynomial templates boils down to solving a system of functional inequalities. 
For computational purpose, we strengthen this system as follows:
\begin{enumerate}
\item We impose that the functions involved in each inequality of the system belong to a convex cone $\K$ included in the set of nonnegative functions. This allows in turn to define the stronger notion of {\em $\K$ well-representative} templates.
\item 
Instantiating $\K$ to the cone of SOS polynomials leads to consider a hierarchy of SOS programs, parametrized by the degrees of the polynomial templates. While solving the hierarchy, we extract polynomial template bases and feasible invariant bounds together with (SOS-based) certificates that the desired property holds.
\end{enumerate}
The potential of the method is demonstrated on several ``toy'' nonlinear programs, defined with medium-size polynomial conditionals/assignments, involving at most 4 variables and of degree up to 3. Numerical experiments illustrate the hardness of program analysis in this context, as simple nonlinear examples can already yield unexpected behaviors.
\paragraph{Organization of the paper.}The paper is organized as follows. In Section~\ref{sem-functional}, we present the programs that we want to analyze and their representation as constrained piecewise discrete-time dynamical system. Next, we recall the collecting semantics that we use and finally remind some required background about abstract semantics for generalized template domains.
Section~\ref{template-generation} contains the main contribution of the paper, namely the definition of well representative 
templates and how to generate such templates in practice using SOS programming. Section~\ref{bench} provides practical computation examples for program analysis.

\section{Static analysis context and abstract template domains}
\label{sem-functional}
In this section, we describe the programs which are considered in this paper. Next, we explain how to analyze them through their representation as discrete-time dynamical systems. Then, we give details about the special properties which can be inferred
on such programs. Finally,  we recall mandatory results for abstract template domains that are used in the sequel of the paper.
\subsection{Program syntax and constrained piecewise discrete-time dynamical system representations}
In this paper, we are interested in analyzing computer science programs. We focus on programs composed of a single loop with a possibly complicated switch-case type loop body. This  
loop is supposed to be written as a nested sequence of \emph{if} statements.
Moreover we suppose that the analyzed programs are written in Static Single Assignment (SSA) form, that is each variable is initialized at most once. We denote by $(x_1,\ldots,x_d)$ the vector of the program variables.
Finally, we consider assignments of variables using only {\em parallel assignments} $(x_1,\ldots,x_d)=T(x_1,\ldots,x_d)$. Tests are either weak inequalities $r(x_1,\ldots,x_d) \leq 0$ or strict inequalities $r(x_1,\ldots,x_d) < 0$. We assume that assignments are functions from $\rd$ to $\rd$ and test functions are functions from $\rd$ to $\rr$. In the program syntax, the notation $\ll$ will be either $\verb+<=+$ or $\verb+<+$. The form of the analyzed program is described in Figure~\ref{programstyle}. 
\begin{figure}[!ht]
\begin{center}
\begin{tabular}{|c|}
\hline
\begin{lstlisting}[mathescape=true]
x $\in$ $\xin$;
while ($r_1^0$(x)$\ll$0 and ... and $r_{n_0}^0$(x)$\ll$0){
  if($r_1^1$(x)$\ll$0){ 
     $\vdots$
     if($r_{n_1}^1$(x)$\ll$0){
        x = $T^1$(x);
     }   
     else{
        $\vdots$
        if($r_{n_i}^i$(x)$\ll$0){
           x = $T^i$(x);   
        }   
     }
  else{
      $\vdots$
  } 
}
\end{lstlisting}\\
\hline
\end{tabular}
\end{center}
\caption{One-loop programs with nested conditional branches}
\label{programstyle}
\end{figure}

As depicted in Figure~\ref{programstyle}, an update $T^i : \rd \to \rd$ of the $i$-th condition branch is executed if and only if the conjunction of tests $r_j^i(x)\ll 0$ holds. 
The variable $x$ is updated by $T^i(x)$ if the current value of $x$ belongs to $X^i:= \{x \in \rd \, | \,  \forall j = 1,\dots,n_i, \ r_j^i(x)\ll 0 \}$.           
Consequently, we interpret programs as \emph{constrained piecewise discrete-time dynamical systems} (CPDS for short). 
The term \emph{piecewise} means that there exists a partition $\{X^i,i\in \ind\}$ of $\rd$ such that for all $i\in \ind$, 
the dynamics of the system is represented by the following relation, for $k\in\nn$:
\begin{equation}
\label{pws}
\text{if } x_k\in X^i\cap X^0,\ x_{k+1}=T^i(x_k) \,.
\end{equation}
We assume that the initial condition $x_0$ belongs to some compact set $\xin$. For the program, $\xin$ is the set where the variables are supposed to be initialized in. Since the test entry for the loop condition can be nontrivial, we add the term \emph{constrained} and $X^0$ denotes the set representing the conjunctions of tests for the loop condition. The iterates of the CPDS are constrained to live in $X^0$: if for some step $k\in\nn$, $x_k\notin X^0$ then the  CPDS is stopped at this iterate with the terminal value $x_k$.
We define a partition as a family of nonempty sets such that:
\begin{equation}
\label{partition}
\bigcup_{i\in\ind} X^i=\rd,\ \forall\, i,j\in\ind,\ i\neq j, X^i\cap X^j\neq \emptyset \,.
\end{equation}
From Equation~\eqref{partition}, for all $k\in\nn^*$ there exists a unique $i\in\ind$ such that $x_k\in X^i$. A set $X^i$ can contain both strict and weak inequalities and characterizes the set of the $n_i$ conjunctions of tests functions $r_j^i$. Let $r^i=(r_1^i,\ldots,r_{n_i}^i)$ stands for the vector of tests functions associated to the set $X^i$. Moreover, for $X^i$, we denote by $r^{i,s}$ (resp. $r^{i,w}$) the part of $r^i$ corresponding to strict (resp. weak) inequalities. Finally, we obtain the representation of the set $X^i$ given by Equation~\eqref{semialgebraic}:
\begin{equation}
\label{semialgebraic}
X^i=\left\{x\in\rd \left| r^{i,s}(x) < 0,\ r^{i,w}(x)\leq 0\right\}\right. \,.
\end{equation}
We insist on the notation: $y< z$ (resp. $y_l<z_l$) means that for all coordinates $l$, $y_l<z_l$ (resp. $y_l\leq z_l$).

We suppose that the sets $\xin$ and $X^0$ also admits the representation given by Equation~\eqref{semialgebraic} and we denote by $r^0$ the vector of tests functions $(r_1^0,\ldots,r_{n_0}^0)$ and by $r^{\mathrm{in}}$ the vector of tests functions $(r_1^{\mathrm{in}},\ldots,r_{n_{\mathrm{in}}}^{\mathrm{in}})$. We also decompose $r^0$ and $r^{\mathrm{in}}$ as strict and weak inequality parts denoted respectively by $r^{0,s}$, $r^{0,w}$, $r^{\mathrm{in},s}$ and $r^{\mathrm{in},w}$.
To sum up, we give a formal definition of CPDS.
\begin{definition}[CPDS]
\label{pwsdef}
A constrained piecewise discrete-time dynamical system (CPDS) is the quadruple $(\xin,X^0,\state,\laws)$ with:
\begin{itemize}
\item $\xin\subseteq \rd$ is the compact of the possible initial conditions;
\item $X^0\subseteq \rd$ is the set of the constraints which must be respected by the state variable;
\item $\state:=\{X^i, i\in \ind\}$ is a partition as defined in Equation~\eqref{partition};
\item $\laws:=\{T^i, i\in\ind\}$ is the family of the functions from $\rd$ to $\rd$, w.r.t. the partition $\state$ satisfying Equation~\eqref{pws}.
\end{itemize}   
\end{definition}
From now on, we associate a CPDS representation to each program of the form described at Figure~\ref{programstyle}. 
Since a program admits several CPDS representations, we choose one of them, but this arbitrary choice does not change the results 
provided in this paper. 
In the sequel, we will often refer to the running example described in Example~\ref{running}.
\begin{example}[Running example]
\label{running}
The program below involves four variables and contains 
an infinite loop with a conditional branch in the loop body.
The update of each branch is polynomial. The parameters $c_{i j}$ (resp.  $d_{i j}$) are given parameters.
During the analysis, we only keep the variables $x_1$ and $x_2$ since 
$oldx_1$ and $oldx_2$ are just memories.
\begin{center}
\begin{tabular}{c}
\begin{lstlisting}[mathescape=true]
$x_1, x_2\in [a_1, a_2] \times [b_1, b_2]$;
$oldx_1$ = $x_1$;
$oldx_2$ = $x_2$;
while (-1 <= 0){
  if ($oldx_1$^2 + $oldx_2$^2 <= 1){ 
      $oldx_1$ = $x_1$;
      $oldx_2$ = $x_2$;
      $x_1$ = $c_{11}$ * $oldx_1$^2 + $c_{11}$ * $oldx_2$^3;
      $x_2$ = $c_{21}$ * $oldx_1$^3 + $c_{22}$ * $oldx_2$^2;
  }
  else{
      $oldx_1$ = $x_1$;
      $oldx_2$ = $x_2$;
      $x_1$ = $d_{11}$ * $oldx_1$^3 + $d_{12}$ * $oldx_2$^2;
      $x_2$ = $d_{21}$ * $oldx_1$^2 + $d_{22}$ * $oldx_2$^2;
  } 
}
\end{lstlisting}
\end{tabular}
\end{center}
Its constrained piecewise discrete-time dynamical system representation corresponds to the quadruple $(\xin,X^0,\{X^1,X^2\},\{T^1,T^2\})$, where the set of initial conditions is:
\[ 
\xin= [a_1, a_2] \times [b_1, b_2] \,,
\]
the set $X^0$ in which the variable $x=(x_1,x_2)$ lies is:
\[
X^0=\rd \,,
\]
the partition verifying Equation~\eqref{partition} is:
\[
X^1=\{x\in\rr^2\mid x_1^2+x_2^2\leq 1\},\quad X^2=\{x\in\rr^2\mid -x_1^2-x_2^2< -1\} \,,
\]
and the functions relative to the partition $\{X^1,X^2\}$ are:
\[
T^1(x)=\left(\begin{array}{c}
c_{1 1} x_1^2 + c_{1 2} x_2^3\\
c_{2 1} x_1^3 + c_{2 2} x_2^2
\end{array}\right)
\text{ and }  
T^2(x)=\left(\begin{array}{c}
d_{1 1} x_1^3 + d_{1 2} x_2^2\\
d_{2 1} x_1^2 + d_{2 2} x_2^2
\end{array}\right)\enspace.
\]
\end{example}
\subsection{Program invariants}
The main goal of the paper is to decide automatically if a given property holds for the analyzed program. We are interested in numerical properties and more precisely in properties on the values taken by the $d$-uplet of the variables of the program. Hence, 
in our point-of-view, a property is just the membership of some set $P\subset \rd$. In particular, we study properties which are valid after an arbitrary number of loop iterates. Such properties are called \emph{loop invariants} of the program. Formally, we use the CPDS representation of a given program and we say that $P$ is a loop invariant of this program if: 
\[
\forall\, k\in\nn,\ x_k\in P \,,
\] 
where $x_k$ is defined at Equation~\eqref{pws} as the state variable at step $k\in\nn$ of the CPDS representation of the program. 

Now, let us consider a program of the form described in Figure~\ref{programstyle} and let us denote by $\pws$ the CPDS representation of this program. The set $\rea(\pws)$ of \emph{reachable values} is the set of all possible values taken by the state variable along the running of $\pws$. We define $\rea(\pws)$ as follows:
\begin{equation}
\label{reachdef}
\rea(\pws)=\{ y\in\rd\mid \exists\ k\in \nn, \exists\ i\in \ind,\ x_k\in X^i\cap X^0,\ y=T^i(x_k)\}\cup \xin \,.
\end{equation}
To prove that a set $P$ is a loop invariant of the program is equivalent to prove that $\rea(\pws)\subseteq P$. 
We can rewrite $\rea(\pws)$ by introducing auxiliary variables $\rea^i$, $i\in\ind$:
\begin{equation}
\label{auxsemantics}
\rea(\pws)=\bigcup_{i\in\ind} \rea^i\cup \xin,\ \rea^i=T^i\left(\rea(\pws)\cap X^i\cap X^0\right) \,.
\end{equation}
Let us denote by $\wp(\rd)$ the set of subsets of $\rd$ and introduce the map $F: \left(\wp(\rr^d)\right)^{|\ind|+1} \rightarrow \left(\wp(\rr^d)\right)^{|\ind|+1}$ defined by:
\begin{equation}
\label{transferfunctional}
F_{i}(C_1,\ldots,C_{|\ind|+1})
=\left\{
\begin{array}{lr}
T^i\left(C_{|\ind|+1}\cap X^i\cap X^0\right)& \text{ if } j\neq |\ind|+1 \,,\\
\bigcup_{k\in\ind} C_k\cup \xin & \text{ otherwise} \,.
\end{array}
\right.
\end{equation}
We equip $\wp(\rr^d)$ with the partial order of inclusion and $\left(\wp(\rr^d)\right)^{|\ind|+1}$ by the standard component-wise partial order. The infimum is understood in this sense i.e. as the greatest lower bound with respect to this order. The smallest fixed point problem is:
\begin{equation*}
\inf \left\{\mathbf{C}=(C_1,\ldots,C_{|\ind|+1})\in\left(\wp(\rr^d)\right)^{|\ind|+1}\\
\mid \forall\, i= 1,\ldots,|\ind|+1, C_i=F_i(\mathbf{C})\right \}\,.
\end{equation*}
It is well-known from Tarski's theorem that the solution of this problem exists, is unique and in this case, it corresponds to $(\rea^1,\rea^2,\ldots,\rea(\pws))$ where $\rea^1,\rea^2...\rea^{|\ind|}$ are defined in Equation~\eqref{auxsemantics}. Tarski's theorem also states that $(\rea^1,\rea^2,\ldots,\rea(\pws))$ is the smallest solution of the following Problem:
\begin{equation*}
\inf \left\{\mathbf{C}=(C_1,\ldots,C_{|\ind|+1})\in\left(\wp(\rr^d)\right)^{|\ind|+1}\\
\mid \forall\, i= 1,\ldots,|\ind|+1, F_i(\mathbf{C})\subseteq C_i\right\} \,.
\end{equation*}

We warn the reader that the construction of $F$ is completely determined by the data of the CPDS $\pws$. But for the sake of conciseness, we do not make it explicit on the notations. Note also that the map $F$ corresponds to a standard transfer function (or collecting semantics functional) applied to the CPDS representation of a program.
\begin{example}[Transfer function of the running example]
Since $X^0=\rd$, the transfer function $F$ associated to the CPDS of Example~\ref{running} is given by: 
\[
\begin{array}{c}
F_1(C_1,C_2,C_3)=T^1(C_3\cap X^1) \,,\\ 
F_2(C_1,C_2,C_3)=T^2(C_3\cap X^2) \,,\\ 
F_3(C_1,C_2,C_3)=C_1\cup C_2\cup \xin \,.
\end{array}
\]
\end{example}
To prove that a subset $P$ is a loop invariant, it suffices to show that $\mathbf{P}=(T^1(P\cap X^1\cap X^0),\ldots,P)$ satisfies $F_{|\ind|+1}(\mathbf{P})\subseteq P$. Nevertheless, $F$ is still not computable and we use abstract interpretation~\cite{CC77} to provide safe over-approximations of $F$. Next, we use generalized abstract template domains as abstract domains and we construct a safe over-approximation of $F$ using a Galois connection. 
In this paper, we consider invariants defined from properties which are encoded with sublevel sets of given functions. A loop invariant is supposed to be the union of sublevel sets of a given function from $\rd$ to $\rr$. 
\begin{definition}[Sublevel property]
\label{funproperty}
Given a function $\kappa$ from $\rd$ to $\rr$, we define the sublevel property $\prop{\kappa}$ as follows:
\[
\prop{\kappa}:=\bigcup_{\alpha\in\rr} \{x\in\rd\mid \kappa(x)\leq \alpha\} \,.
\]
\end{definition}
\begin{example}[Sublevel property examples]
\begin{enumerate}
\item Let $\kappa$ be a norm on $\rd$, then $\prop{\kappa}$ is the property 
``the values taken by the variables are bounded''.
\item Let $\kappa : x \mapsto x_i$,
then $\prop{\kappa}$ is the property ``the values taken by the variable 
$x_i$ are bounded from above''.
\item We can ensure that the set of possible values taken by the program variables avoids an unsafe region
with a fixed level sublevel property. For example, if the property to show consists in proving that the square norm of the variable is still greater than 1, we can set $\kappa(x)=1- \|x \|_2^2$ and restrict the sublevel sets to those for which  $\alpha\leq 0$.
\end{enumerate}
\end{example}
A sublevel property is called \emph{sublevel invariant} when this property is a loop invariant.
We describe how to construct template bases, so that we can prove that a sublevel property is a sublevel invariant.
\if{
Given a property, we pick a relevant template basis w.r.t. this property.
As the function  $\kappa$ related to a given property does not take into account the dynamics of the program, it does not seem to be relevant w.r.t of the template basis. 
}\fi
\subsection{Abstract template domains}
\label{template-domain}
The concept of generalized templates was introduced in~\cite{DBLP:conf/esop/AdjeGG10,DBLP:journals/corr/abs-1111-5223}. 
Let $\FR$ stands for the set of functions from $\rd$ to $\rr$.
\begin{definition}[Generalized templates]
\label{templatedef}
A generalized template $p$ is a function from $\rd$ to $\rr$ over the vector of variables $(x_1,\ldots,x_d)$.
\end{definition}
Templates can be viewed as implicit functional relations on variables to prove certain properties on the analyzed program. 
We denote by $\pp$ the set of templates. 
First, we suppose that $\pp$ is given by some oracle and say that $\pp$ forms a template basis. 
Here, we recall the required background about generalized templates (see~\cite{DBLP:conf/esop/AdjeGG10,DBLP:journals/corr/abs-1111-5223} for more details). 
\subsubsection{Basic notions}
We replace the classical concrete semantics by meaning of sublevel sets i.e.
we have a functional representation of numerical invariants through the functions of $\pp$. An invariant
is determined as the intersection of sublevel sets. The problem is thus reduced to find 
optimal level sets on each template $p$. Let $\fr$ stands for the set of functions from $\pp$ to 
$\br=\rr\cup\{-\infty\}\cup\{+\infty\}$. 
\begin{definition}[$\pp$-sublevel sets]
\label{concretisation}
For $w\in\fr$, we associate the $\pp$-sublevel set $w^{\star}\subseteq\rd$ given by:
\[
\displaystyle{w^{\star}=\{x\in \rd\mid p(x)\leq w(p),\ \forall p\in\pp \}=\bigcap_{p\in\pp}\{x\in\rd\mid p(x)\leq w(p)\}} \, .
\]
\end{definition}
\if{
When $\pp$ is a set of convex functions, the $\pp$-sublevel sets corresponds to the intersection of classical 
sublevel sets from convex analysis. In our case, $\pp$ can contain non-convex functions so $\pp$-sublevel sets 
are not necessarily convex in the usual sense. 
}\fi
In convex analysis, a closed convex set can be represented 
by its support function i.e. the supremum of linear forms on the set (e.g. ~\cite[\S~13]{Roc}). 
Here, we use the generalization by Moreau~\cite{Moreau} (see also~\cite{Rubinov,Singer}) which consists in replacing the linear forms by the functions $p\in\pp$. 
\begin{definition}[$\pp$-support functions]
\label{ppsupport}
To $X\subseteq\rd$, we associate the abstract support function denoted by
$X^{\dag}:\pp\mapsto \br$ and defined by:
\[
X^{\dag}(p)=\sup_{x\in X} p(x) \, .
\]
\end{definition}  
Let $C$ and $D$ be two ordered sets equipped respectively by the order $\leq_C$ and $\leq_D$. Let $\psi$ be a map from $C$ to $D$ and $\varphi$ be a map from $D$ to $C$. We say that the pair $(\psi,\varphi)$ defines a Galois connection between $C$ and $D$ if and only if $\psi$ and $\varphi$ are monotonic and the equivalence $\psi(c)\leq_D d\iff \varphi(d)\leq_C c$ holds for all $c\in C$ and all $d\in D$.   

We equip $\fr$ with the partial order of real-valued
functions i.e. $w\leq v\iff w(p)\leq v(p)\ \forall p\in \pp$. The  set $\wp(\rd)$ is equipped with 
the inclusion order.
\begin{proposition}
\label{Galois}
The pair of maps $w\mapsto w^{\star}$ and $X\mapsto X^{\dag}$ defines a 
Galois connection between $\fr$ and the set of subsets of $\rd$.
\end{proposition}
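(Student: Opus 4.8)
The plan is to verify directly the two requirements in the definition of a Galois connection recalled above: monotonicity of each of the two maps, and the adjunction equivalence relating them. I would orient the connection in the usual abstract-interpretation way, treating the abstraction $\psi : X \mapsto X^{\dag}$ as the map from $\wp(\rd)$ (ordered by inclusion) to $\fr$ (ordered pointwise), and the concretization $\varphi : w \mapsto w^{\star}$ as the map in the reverse direction. The whole statement then reduces to one fundamental equivalence, $X^{\dag} \leq w \iff X \subseteq w^{\star}$, together with the two monotonicity facts.

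First I would dispatch monotonicity, which is immediate from the definitions. If $X \subseteq Y$, then for every template $p \in \pp$ the supremum over the larger set dominates, so $X^{\dag}(p) = \sup_{x\in X} p(x) \leq \sup_{x\in Y} p(x) = Y^{\dag}(p)$, giving $X^{\dag} \leq Y^{\dag}$. Symmetrically, if $w \leq v$ in $\fr$, then for each $p$ the level set $\{x \mid p(x) \leq w(p)\}$ is contained in $\{x \mid p(x) \leq v(p)\}$; since intersection is monotone with respect to inclusion, taking the intersection over all $p \in \pp$ yields $w^{\star} \subseteq v^{\star}$.

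The heart of the argument is the adjunction equivalence, which I would obtain by unfolding both sides to the same first-order statement. By Definition~\ref{concretisation}, $X \subseteq w^{\star}$ means $\forall x \in X,\ \forall p \in \pp,\ p(x) \leq w(p)$. Exchanging the two universal quantifiers, this reads $\forall p \in \pp,\ \forall x \in X,\ p(x) \leq w(p)$. Now the inner condition $\forall x \in X,\ p(x) \leq w(p)$ says precisely that $w(p)$ is an upper bound of $\{p(x) \mid x \in X\}$, which by the least-upper-bound characterization of the supremum is equivalent to $\sup_{x\in X} p(x) = X^{\dag}(p) \leq w(p)$. Quantifying back over $p$ gives exactly $X^{\dag} \leq w$. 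Every implication in this chain is reversible, so the equivalence holds in both directions, which is the defining adjunction of the Galois connection.

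The step I expect to require the most care is not conceptual but is the bookkeeping around the extended reals: the suprema take values in $\br = \rr \cup \{-\infty,+\infty\}$, so the least-upper-bound characterization of $\sup$ must be applied with the right conventions in the boundary cases, namely $X = \emptyset$ (where $X^{\dag}(p) = -\infty$, consistently with $\emptyset \subseteq w^{\star}$ holding vacuously) and the case where $p$ is unbounded above on $X$ (where $X^{\dag}(p) = +\infty$ forces $w(p) = +\infty$). Once the central equivalence is in place it formally subsumes the two monotonicity statements, but I would keep the direct monotonicity arguments for readability; together they show that the pair $(\psi,\varphi)$ satisfies the conditions of the definition and hence forms a Galois connection between $\fr$ and $\wp(\rd)$.
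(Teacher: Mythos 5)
Your proof is correct, but there is nothing in the paper to compare it with: Proposition~\ref{Galois} is stated without proof, being recalled as background on generalized template domains from the cited references. Your direct verification --- monotonicity of both maps, plus the adjunction $X^\dag \leq w \iff X \subseteq w^\star$ obtained by unfolding Definitions~\ref{concretisation} and~\ref{ppsupport}, exchanging quantifiers, and invoking the least-upper-bound property of suprema in the complete lattice $\br$ --- is the standard argument and correctly fills that gap, including the boundary cases $X=\emptyset$ and $p$ unbounded on $X$, which genuinely need the $\br$-valued conventions you spell out. One discrepancy deserves attention: the equivalence you prove is the usual abstract-interpretation adjunction, $\psi(X)\leq w \iff X\subseteq \varphi(w)$ with $\psi=(\cdot)^\dag$ and $\varphi=(\cdot)^\star$, whereas the paper's definition of a Galois connection literally requires $\psi(c)\leq_D d\iff \varphi(d)\leq_C c$; instantiated with the pair announced in the proposition, that reading would amount to $w^\star\subseteq X\iff X^\dag\leq w$, which is false (for $X=\emptyset$ the right-hand side holds for every $w$, while the left-hand side fails whenever $w^\star\neq\emptyset$). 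So one inequality in the paper's definition has its orientation reversed --- a typo --- and your orientation is the correct one; it is also the only reading under which the closure identities in Equation~\eqref{galoisprop} and the lattice isomorphism of Theorem~\ref{lattice} actually follow. Your closing remark that the adjunction alone implies both monotonicity claims is likewise accurate, so keeping the direct monotonicity arguments is purely a matter of exposition.
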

In the terminology of abstract interpretation, $(\cdot)^\dag$ is the abstraction function,
and $(\cdot)^\star$ is the concretisation function. The Galois connection result provides the 
correctness of the semantics. We also remind the following property:

\begin{equation}
\label{galoisprop}
(((w^\star)^\dag)^\star=w^\star \,, \qquad  ((X^\dag)^\star)^\dag=X^\dag \, .
\end{equation}
\subsubsection{The lattices of $\pp$-convex sets and $\pp$-convex functions}
Now, we are interested in closed elements (in term of Galois connection), called 
$\pp$-convex elements.
\begin{definition}[$\pp$-convexity]
\label{abstractconvexity}
Let $w\in\fr$, we say that $w$ is a $\pp$-convex 
function if $w=(w^\star)^{\dag}$.
A set $X\subseteq\rd$ is a $\pp$-convex set if $X=(X^{\dag})^{\star}$.
We respectively denote by $\vep$ and $\ved$ the set of $\pp$-convex functions of $\fr$ and
the set of $\pp$-convex sets of $\rd$.
\end{definition}
The family of functions $\vep$ is ordered by the partial order of real-valued
functions. The family of sets $\ved$ is ordered by the inclusion order.
Galois connection allows to construct lattice operations on $\pp$-convex elements. 
\begin{definition}[The meet and join]
Let $v$ and $w$ be in $\fr$. We denote by $\inf(v,w)$
and $\sup(v,w)$ the functions defined respectively by, 
$p\mapsto\inf(v(p),w(p))$ and $p\mapsto\sup(v(p),w(p))$.
We equip $\vep$ with the join operator $v\vee w=\sup(v,w)$ and the meet 
operator $v\wedge w =(\inf(v,w)^{\star})^{\dag}$.
Similarly, we equip $\ved$ with the join operator 
$X\sqcup Y=((X\cup Y)^{\dag})^{\star}$ and the meet operator $X\sqcap Y =X\cap Y$.
\end{definition}
The next theorem follows readily from the fact that the pair of $v\mapsto v^\star$ and $C\mapsto C^\dag$ defines a Galois
connection (see e.g.~\cite[\S~7.27]{priestley}).
\begin{theorem}
\label{lattice}
The complete lattices $(\vep,\wedge,\vee)$ and $(\ved,\sqcap,\sqcup)$ are isomorphic. 
\end{theorem}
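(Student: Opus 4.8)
The plan is to use the restrictions of the two maps $w\mapsto w^\star$ and $X\mapsto X^\dag$ furnished by the Galois connection of Proposition~\ref{Galois} as a pair of mutually inverse order-isomorphisms between $\vep$ and $\ved$. This is the standard fact that the closed elements of a Galois connection form isomorphic complete lattices (see~\cite[\S~7.27]{priestley}), specialized to the present setting. Concretely, I would set $\phi := (\cdot)^\star$ restricted to $\vep$ and $\gamma := (\cdot)^\dag$ restricted to $\ved$, and then verify the four points below.

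First I would check that these restrictions are well defined, i.e.\ that $\phi$ sends $\vep$ into $\ved$ and $\gamma$ sends $\ved$ into $\vep$. For any $w$, the first identity in~\eqref{galoisprop} reads $((w^\star)^\dag)^\star = w^\star$, which exhibits $w^\star$ as a fixed point of the closure operator $X\mapsto (X^\dag)^\star$; by Definition~\ref{abstractconvexity} this means $w^\star\in\ved$, so in particular $\phi(\vep)\subseteq\ved$. Dually, the second identity $((X^\dag)^\star)^\dag = X^\dag$ shows $X^\dag\in\vep$ for every $X$, hence $\gamma(\ved)\subseteq\vep$.

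Next I would show $\phi$ and $\gamma$ are mutually inverse. For $w\in\vep$ one has $\gamma(\phi(w)) = (w^\star)^\dag = w$ directly from the defining equation $w=(w^\star)^\dag$ of a $\pp$-convex function; symmetrically $\phi(\gamma(X)) = (X^\dag)^\star = X$ for $X\in\ved$. Thus $\phi$ is a bijection with inverse $\gamma$. Since both $(\cdot)^\star$ and $(\cdot)^\dag$ are monotone (a smaller function has a smaller sublevel set, and a smaller set has a smaller support function), $\phi$ is a monotone bijection whose inverse is also monotone, i.e.\ an order-isomorphism between the posets $\vep$ and $\ved$.

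Finally, I would upgrade this order-isomorphism to a lattice isomorphism. Both $\fr$ and $\wp(\rd)$ are complete lattices, and $\vep$, $\ved$ are the fixed-point sets of the associated closure operators, so they are themselves complete lattices in which the binary operations are computed exactly as stated in the paper: the join $\vee=\sup$ (resp.\ the meet $\sqcap=\cap$) is already $\pp$-convex, whereas the meet $\wedge$ (resp.\ the join $\sqcup$) is obtained by closing the ambient infimum (resp.\ union). An order-isomorphism automatically preserves all existing meets and joins, so $\phi(v\wedge w) = \phi(v)\sqcap\phi(w)$ and $\phi(v\vee w) = \phi(v)\sqcup\phi(w)$, which yields the desired lattice isomorphism. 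The only genuine bookkeeping is to confirm that the abstractly defined operations $\wedge,\vee,\sqcap,\sqcup$ coincide with the intrinsic lattice operations on the fixed-point sets; I expect this matching, rather than any deep difficulty, to be the main point to verify carefully.
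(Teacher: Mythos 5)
Your proposal is correct and follows essentially the same route as the paper: the paper proves this theorem precisely by invoking the standard fact that the closed elements of a Galois connection (Proposition~\ref{Galois}) form isomorphic complete lattices, citing~\cite[\S~7.27]{priestley}, which is exactly the argument you spell out via Equation~\eqref{galoisprop}, Definition~\ref{abstractconvexity}, and the mutual inverseness of the restricted maps $(\cdot)^\star$ and $(\cdot)^\dag$. Your additional bookkeeping (that $\vee$ and $\sqcap$ need no closure while $\wedge$ and $\sqcup$ do, reflecting that $w\mapsto(w^\star)^\dag$ is a kernel operator and $X\mapsto(X^\dag)^\star$ a closure operator) is a faithful expansion of what the paper leaves implicit.
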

\subsubsection{Abstract semantics}
Since the pair of maps $w\mapsto w^{\star}$ and $X\mapsto X^{\dag}$ is a Galois connection 
(Proposition~\ref{Galois}), we can construct abstract semantics functional from this pair and the map $F$ defined at Equation~\eqref{transferfunctional}. We obtain a map $\sha{F}$ from $\vep^{|\ind|+1}$ to itself defined for $w\in\vep^{|\ind|+1}$ and $p\in\pp$ by:
\begin{equation*}
\label{abstraction}
\mybrackets{\sha{F_{i}}(w)}(p)=
\left\{
\begin{array}{c}
\displaystyle{\sup_{y\in T^i\left( w_{|\ind|+1}^{\star}\cap X^i\cap X^0\right)} p(y)}
=\displaystyle{\sup_{\substack{x\in w_{|\ind|+1}^{\star}\\ r_s^i(x)< 0,\ r_w^i(x)\leq 0\\ r_s^0(x)<0,\ r_w^0(x)\leq 0}} p(T^i(x))}\\
\displaystyle{\sup_{y\in \bigcup_{j\in\ind} w_{j}^{\star}\cup \xin} p(y)}
=\left(\bigcup_{j\in\ind} w_{j}^{\star}\cup \xin\right)^{\dag}(p)
\end{array}
\right.
\end{equation*}
Since $F$ is conditioned by the data of the CPDS $\pws$, it is also the case for $\sha{F}$. 
As a corollary of Theorem~\ref{lattice}, the best abstraction of $\rea(\pws)$ in the lattice $\vep$ 
is the smallest fixed point of Equation~\eqref{loopinvariantabs}. 
\begin{equation}
\label{loopinvariantabs}
\inf \left\{
\begin{array}{l}
\mathbf{w}=(w_1,\ldots,w_{|\ind|+1})\in\vep^{|\ind|+1}\\
\st\ \forall\, i=1,\ldots,|\ind|+1,\ \sha{F_i}(\mathbf{w})\leq w_i \,.
\end{array}
\right\}
\end{equation}
The infimum is understood in the sense of the order of the component-wise order of the complete lattice $\vep^{|\ind|+1}$. Using Tarski's theorem, the solution of Equation~\eqref{loopinvariantabs} exists and is unique and is usually called the abstract semantics. This latter solution is optimal but any feasible solution could provide an answer to decide whether a sublevel property is an invariant of the program. 
\begin{definition}[Feasible invariant bound]
\label{feasibleinvariantbound}
The function $w\in\vep$ is a feasible invariant bound w.r.t. to the CPDS $\pws=(\xin,X^0,\{X^i,i\in\ind\}, \{T^i,i\in\ind\})$ 
iff it exists $(w_1,\ldots,w_{|\ind|})\in\vep^{|\ind|}$ such that:
\begin{equation}
\label{feasibleequation}
w\geq \sup\{{\xin}^\dag,\sup_{i\in\ind} w_i\}\wedge\left(\ \forall\, i\in\ind,\ w_i\geq \left(T^i\left(w^\star\cap X^i\cap X^0\right)\right)^\dag\right)
\end{equation}
In the sequel, we denote by $\feas{\pws}$ the set of feasible invariant bounds.
\end{definition}
From the definition of feasible invariant bound, we state the following proposition.
\begin{proposition}
\label{basicresults}
Let us consider a CPDS $\pws=(\xin,X^0,\{X^i,i\in\ind\}, \{T^i,i\in\ind\})$.
The following statements are true:
\begin{enumerate}
\item Let $(w_1,\ldots,w_{|\ind|+1})$ be a solution of Problem~\eqref{loopinvariantabs}, then $w_{|\ind|+1}$ is the smallest feasible invariant bound w.r.t. $\pws$;
\item For all $w\in\feas{\pws}$, $\rea(\pws)\subseteq w^\star$. 
\end{enumerate}
\end{proposition}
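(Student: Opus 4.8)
The plan is to show that Definition~\ref{feasibleinvariantbound} and the post-fixpoint system appearing in Problem~\eqref{loopinvariantabs} describe the very same objects, and then to transfer the concrete least-fixpoint characterization of $\rea(\pws)$ through the Galois connection. Throughout I would rely on three elementary facts. First, the adjunction underlying Proposition~\ref{Galois}: for every $X\subseteq\rd$ and every $v\in\fr$, one has $X\subseteq v^\star\iff X^\dag\leq v$. Second, the support function of a union splits as a supremum, $\bigl(\bigcup_j A_j\bigr)^\dag=\sup_j A_j^\dag$, directly from Definition~\ref{ppsupport}. Third, every $\pp$-convex function $v\in\vep$ satisfies $(v^\star)^\dag=v$ by Definition~\ref{abstractconvexity}.

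First I would rewrite the post-fixpoint constraints $\sha{F_i}(\mathbf{w})\leq w_i$ defining Problem~\eqref{loopinvariantabs}. Using the expression of $\sha{F}$ together with the union formula and the $\pp$-convexity of the auxiliary components $w_j\in\vep$, the last constraint becomes $\sha{F_{|\ind|+1}}(\mathbf{w})=\bigl(\bigcup_{j\in\ind}w_j^\star\cup\xin\bigr)^\dag=\sup\{\xin^\dag,\sup_{j\in\ind}w_j\}$, while for $i\in\ind$ the constraint reads $w_i\geq\bigl(T^i(w_{|\ind|+1}^\star\cap X^i\cap X^0)\bigr)^\dag$. Comparing with Equation~\eqref{feasibleequation}, the system $\{\sha{F_i}(\mathbf{w})\leq w_i\}_i$ for $\mathbf{w}=(w_1,\ldots,w_{|\ind|+1})$ is exactly the conjunction asserting that $w_{|\ind|+1}$ is a feasible invariant bound with auxiliaries $w_1,\ldots,w_{|\ind|}$. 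Hence the last coordinate of any feasible point of Problem~\eqref{loopinvariantabs} is a feasible invariant bound, and conversely every $v\in\feas{\pws}$ with auxiliaries $(v_1,\ldots,v_{|\ind|})$ yields a feasible point $(v_1,\ldots,v_{|\ind|},v)$. For statement~(1) I would then invoke Tarski's theorem as recalled below Equation~\eqref{loopinvariantabs}: the solution $(w_1,\ldots,w_{|\ind|+1})$ is the componentwise least feasible point, so its last coordinate $w_{|\ind|+1}$ is a feasible invariant bound and, for any other $v\in\feas{\pws}$, the associated feasible point dominates the least one coordinatewise, giving $w_{|\ind|+1}\leq v$.

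For statement~(2), given $w\in\feas{\pws}$ with auxiliaries $w_1,\ldots,w_{|\ind|}$, I would lift the inequalities of Definition~\ref{feasibleinvariantbound} to the concrete level through the adjunction and form the tuple $\mathbf{C}=(w_1^\star,\ldots,w_{|\ind|}^\star,w^\star)\in(\wp(\rd))^{|\ind|+1}$. Applying $X\subseteq v^\star\iff X^\dag\leq v$ to each inequality shows that $\mathbf{C}$ is a post-fixpoint of the concrete map $F$ of Equation~\eqref{transferfunctional}: for $i\in\ind$, $w_i\geq(T^i(w^\star\cap X^i\cap X^0))^\dag$ gives $F_i(\mathbf{C})=T^i(w^\star\cap X^i\cap X^0)\subseteq w_i^\star$, and $w\geq\sup\{\xin^\dag,\sup_{i\in\ind}w_i\}$ gives $F_{|\ind|+1}(\mathbf{C})=\bigcup_{i\in\ind}w_i^\star\cup\xin\subseteq w^\star$ (again via the union formula and $\pp$-convexity). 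Since $(\rea^1,\ldots,\rea(\pws))$ is the least post-fixpoint of $F$ by Tarski's theorem (as stated after Equation~\eqref{auxsemantics}), it is dominated by $\mathbf{C}$ coordinatewise, and the last coordinate yields $\rea(\pws)\subseteq w^\star$.

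The steps are mostly a careful unfolding of the definitions. The only genuinely load-bearing point, and the one I would treat most cautiously, is the systematic and correctly oriented use of the adjunction $X\subseteq v^\star\iff X^\dag\leq v$ together with the identity $(w_j^\star)^\dag=w_j$ for $\pp$-convex $w_j$: these are precisely what make the abstract constraints of Problem~\eqref{loopinvariantabs} and the concrete post-fixpoint inequalities for $F$ mirror the feasible invariant bound conditions exactly, so an inverted inequality here would silently break both parts.
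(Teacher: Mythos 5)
Your proof is correct. The paper itself states Proposition~\ref{basicresults} \emph{without} proof (it is introduced as following directly ``from the definition of feasible invariant bound''), and your argument supplies precisely the details the surrounding text implicitly relies on: the post-fixpoint constraints of Problem~\eqref{loopinvariantabs} coincide with Conjunction~\eqref{feasibleequation} via the adjunction $X^\dag\leq v\iff X\subseteq v^\star$ together with the identity $(w_j^\star)^\dag=w_j$ on $\vep$, and Tarski's theorem---invoked at the abstract level for statement (1) and at the concrete level, as recalled after Equation~\eqref{auxsemantics}, for statement (2)---does the rest.
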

For a given program represented by the CPDS $\pws$, we recall that an invariant $P\subset \rd$ is to said be an \emph{inductive invariant} of this program if for all $k\in\nn$, the implication $x_k\in P\implies x_{k+1}\in P$ holds for the state variable $x_k$. 
Next, for a given function $w\in\vep$, we give a simple condition in term of inductive invariants (up to test functions) for $w$ to be a feasible invariant bound. 
\begin{proposition}[Loop head invariants in template domains]
\label{inductive}
Let us consider the CPDS $\pws=(\xin,X^0,\{X^i,i\in\ind\}, \{T^i,i\in\ind\})$ and $w\in\vep$. Suppose that:
\if{
\begin{enumerate}
\item $ $ ;
\item $ $ ;
\item $ $.
\end{enumerate}
}\fi
\begin{equation}
\label{implication}
\xin\subseteq w^\star \wedge \left(\forall\, i\in\ind\ \left(x\in w^{\star}\wedge x\in X^i\wedge x\in X^0\implies T^i(x)\in w^{\star}\right)\right) \,.
\end{equation}
Then $w\in\feas{\pws}$.
\end{proposition}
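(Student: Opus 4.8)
The plan is to prove feasibility directly, by exhibiting an explicit witness for the auxiliary functions $(w_1,\dots,w_{|\ind|})$ demanded by Definition~\ref{feasibleinvariantbound}. The natural choice is to saturate the second family of constraints, i.e. to set
\[
w_i := \left(T^i\left(w^\star\cap X^i\cap X^0\right)\right)^\dag, \qquad i\in\ind,
\]
so that the inequalities $w_i\geq (T^i(w^\star\cap X^i\cap X^0))^\dag$ of \eqref{feasibleequation} hold with equality by construction. Before doing anything else I would check that each $w_i$ genuinely lies in $\vep$, since Definition~\ref{feasibleinvariantbound} quantifies over $\vep^{|\ind|}$ and not merely over $\fr$. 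This is immediate from the second identity in \eqref{galoisprop}: the relation $((X^\dag)^\star)^\dag=X^\dag$ states exactly that every abstract support function $X^\dag$ is $\pp$-convex in the sense of Definition~\ref{abstractconvexity}, and applying it with $X=T^i(w^\star\cap X^i\cap X^0)$ gives $w_i\in\vep$.

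With the witness fixed, it remains to verify the first constraint of \eqref{feasibleequation}, namely $w\geq \sup\{\xin^\dag,\sup_{i\in\ind}w_i\}$. Because the order on $\fr$ is pointwise and the join is the pointwise supremum, this single inequality is equivalent to the conjunction of $w\geq\xin^\dag$ with $w\geq w_i$ for every $i\in\ind$, and I would establish these one at a time. The two tools here are the monotonicity of the abstraction map $(\cdot)^\dag$, which is part of the Galois connection of Proposition~\ref{Galois}, together with the hypothesis $w\in\vep$, which by Definition~\ref{abstractconvexity} gives $w=(w^\star)^\dag$.

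The bound $w\geq\xin^\dag$ then follows by applying $(\cdot)^\dag$ to the inclusion $\xin\subseteq w^\star$ supplied by \eqref{implication}: monotonicity yields $\xin^\dag\leq(w^\star)^\dag=w$. For the remaining bounds I would first rephrase the inductiveness half of \eqref{implication} as a single set inclusion, observing that $x\in w^\star\cap X^i\cap X^0\implies T^i(x)\in w^\star$ is precisely $T^i(w^\star\cap X^i\cap X^0)\subseteq w^\star$. Applying the monotone map $(\cdot)^\dag$ and using $w=(w^\star)^\dag$ once more gives $w_i=(T^i(w^\star\cap X^i\cap X^0))^\dag\leq(w^\star)^\dag=w$. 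Collecting the three inequalities shows that $w$ dominates $\sup\{\xin^\dag,\sup_{i\in\ind}w_i\}$, so both constraints of Definition~\ref{feasibleinvariantbound} are met and $w\in\feas{\pws}$.

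As for difficulty, the argument is essentially bookkeeping: its whole content is the dictionary between set inclusions and functional inequalities furnished by the Galois connection, plus the closure identity $w=(w^\star)^\dag$. The one point that deserves a moment of care is the very first step, confirming that the chosen $w_i$ are admissible witnesses living in $\vep$; this is exactly where the idempotency relation \eqref{galoisprop} is indispensable. Beyond faithfully translating the logical form of \eqref{implication} into the inclusion $T^i(w^\star\cap X^i\cap X^0)\subseteq w^\star$, I do not expect any genuine obstacle.
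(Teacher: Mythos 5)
Your proof is correct, but it uses a different witness than the paper does. The paper's (one-line) argument simply takes $w_i := w$ for every $i\in\ind$: membership of each $w_i$ in $\vep$ is then automatic from the hypothesis $w\in\vep$, the inequality $w\geq\sup\{\xin^\dag,\sup_{i\in\ind}w_i\}$ reduces to $w\geq\xin^\dag$ (which is exactly the adjunction applied to $\xin\subseteq w^\star$), and the constraints $w_i\geq\bigl(T^i\bigl(w^\star\cap X^i\cap X^0\bigr)\bigr)^\dag$ follow from the adjunction applied to the inclusion $T^i\bigl(w^\star\cap X^i\cap X^0\bigr)\subseteq w^\star$ encoded by \eqref{implication}. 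You instead saturate those constraints, choosing $w_i:=\bigl(T^i\bigl(w^\star\cap X^i\cap X^0\bigr)\bigr)^\dag$, which forces you to verify separately that these support functions lie in $\vep$ — a step you handle correctly via the idempotency identity \eqref{galoisprop}, and which the paper's choice sidesteps entirely. What your route buys is the tightest admissible witness (the constraints on the $w_i$ hold with equality, mirroring the "best abstraction" viewpoint behind Problem~\eqref{loopinvariantabs}); what the paper's route buys is brevity, since with $w_i=w$ the whole proposition collapses to two uses of the Galois connection of Proposition~\ref{Galois}. Both arguments rest on the same dictionary between inclusions and functional inequalities, so the difference is one of economy rather than substance.
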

\begin{proof}
From the definition of the $(\cdot)^\dag$ operator and Proposition~\ref{Galois}, Conjunction~\eqref{feasibleequation} holds with $w_i=w$ for all $i\in\ind$.
\if{
By Proposition~\ref{Galois}, the first inequality implies that $C^{\dag}\leq w$. Moreover using the 
implications involving $T^i$ and $T^e$, we get $\forall\, p\in\pp$,
$w(p)\geq \displaystyle{\sup_{x\in w^{\star}\wedge r(x)\leq 0\wedge s(x)\leq 0} p(T^i(x))}$
and $w(p)\geq \displaystyle{\sup_{x\in w^{\star}\wedge r(x)\leq 0\wedge s(x)\geq 0} p(T^e(x))}$
which implies $w(p)\geq \displaystyle{\sup_{x\in w^{\star}\wedge r(x)\leq 0\wedge s(x)> 0} p(T^e(x))}$.
This is nothing but $w\geq \left(T^i\left(w^\star \cap r^{-1}((-\infty,0])\cap s^{-1}((-\infty,0])\right)\right)^\dag$
and $w\geq \left(T^e\left(w^\star \cap r^{-1}((-\infty,0])\cap s^{-1}((-\infty,0])\right)\right)^\dag$.
From definition of suprema, we conclude that $w\in\vep$ is a feasible invariant bound w.r.t. 
$Pr(C,r,s,T^i,T^e)$ by simply taking $v_2=w,v_3=w,v_4=w$.
}\fi
\qed
\end{proof}
We recalled that abstract template domains produce invariants, i.e. $\pp$-sublevel sets of feasible invariant bounds. It is not surprising since abstract template domains are abstract domains. The main issue is that $\pp$ is supposed to be given. The question is which templates basis $\pp$ can produce a nontrivial (strictly smaller that $\rd$) feasible invariant bound? This question can be refined when we want to show that some sublevel property is an invariant: which templates basis can ensure that the sublevel property is an invariant of the program? We propose an answer by considering Equation~\eqref{implication} as a system of equations, where unknowns are the template basis $\pp$
and $w\in\vep$. Given a sublevel $\prop{\kappa}$, we also impose that $w$ and $\pp$ satisfy $w^\star\subseteq \prop{\kappa}$. This latter constraint leads to the computation of a level $\alpha$ for which $\{x\in\rd \mid\kappa(x)\leq \alpha\}$ is an invariant of the program.  
\section{Proving program properties using sum-of-squares}
\label{template-generation}
Here, we describe how to certify that a sublevel property is a loop invariant using sum-of-squares (SOS) approximations. In Section~\ref{general}, we provide a formal definition of the set of template bases that we shall use to the latter certification.
Then we describe how to construct template bases so that we can prove sublevel properties (Section~\ref{simplemethods}). 
In the end, we explain how to compute such bases in practice, by solving a hierarchy of SOS programs (Section~\ref{sos}).
\subsection{The general setting}
\label{general}
\begin{definition}[Well-representative template basis w.r.t. a CPDS and a sublevel property]
\label{boundwellrepre}
Let $\prop{\kappa}$ be a sublevel property and $\pws=(\xin,X^0,\{X^i,i\in\ind\},\{T^i,i\in\ind\})$ be a CPDS. The template basis $\pp$ is  \well w.r.t. $\pws$ and $\prop{\kappa}$ iff there exists $w\in\feas{\pws}$ such that $w^\star\subseteq \prop{\kappa}$.
\end{definition}
In the sequel, we fix a CPDS $\pws=(\xin,X^0,\{X^i,i\in\ind\},\{T^i,i\in\ind\})$ and a sublevel property $\prop{\kappa}$.

Well-representative template bases explicit the sets of implicit functional relations on the program variables, 
needed to prove that a sublevel property is an invariant. Next, we define a cone structure to strengthen the notion
of well-representative bases. 
\if{
Let $\K$ be a convex cone, included in the set $\FRP$ of nonnegative functions on $\rd$. 
 bases easier to handle: the $\K$ well-representative template bases. To define this new notion,
we use a convex cone $\K$ of the set of functions on $\rd$ which take nonnegative values. We give a formal
definition of this kind of convex cones at Definition~\ref{convexcones}.
From now, we denote by $\FRP$ the set of functions on $\rd$ which have nonnegative values.
The operator $+$ and the scalar multiplication are understood in the functional sense. 
}\fi
\begin{definition}[Convex cones containing the scalars in $\FRP$]
\label{convexcones}
A non-empty subset $\K$ of $\FRP$ is a convex cone containing the scalars iff:
 \begin{enumerate}
\item for all $f\in\K$, for all $t\geq 0$, $tf\in\K$;
\item for all $f,g\in \K$, $f+g\in \K$;
\item for all $c\in \rr_+$, $x\mapsto c\in\K$;
\end{enumerate}
\end{definition}
In the sequel, we write $c\in\K$ instead of $x\mapsto c\in\K$, for each $c\in\rr_+$.     
For a convex cone containing the scalars $\K$, $\K^k$ stands for the set of 
vectors of $k$ elements of $K$ and $\K^{n\times k}$ stands for the set of tableaux of $n\times k$ elements of $\K$. 
For $\lambda\in \K^{n\times k}$, we denote the ``row m'' of $\lambda$ by $\lambda_{m,\cdot}$ 
and the ``column j'' of $\lambda$ by $\lambda_{\cdot,j}$. Thus $\lambda_{m,j}$ refers 
to the $m,j$ element of the tableau $\lambda$.
 
We derive a stronger notion of well-representative 
template bases, namely $\K$ well-representative template bases  
This notion is more restrictive, as
a $\K$ well-representative template 
basis deals with a system of inequalities instead of conjunctions of implications. 
\begin{definition}[$\K$ well-representative template basis]
\label{kwell}
A finite template basis $\pp=\{p_1,\ldots,p_k\}$ is a $\K$ well-representative template basis w.r.t. 
$\pws$ and $\prop{\kappa}$ iff there exist $w\in\rr^k$, $\alpha\in\rr$, $\nu\in\K^k$ and for all $i\in\ind$, there exist $\lambda^i\in\K^{k\times k}$, $\mu^i\in \K^{k\times n_i}$, $\gamma^i\in
\K^{k\times n_0}$ such that: 
\begin{enumerate}
\item Initial condition satisfiability: $\forall\, l=1,\ldots, k$,
\[
w_l\geq \sup_{y\in \xin} p_l(y) \, .
\]
\item ``Local'' branch satisfiability: $\forall\, l=1,\ldots, k$, $\forall\, i\in\ind$:
\[
w_l-\sum_{j=1}^k \lambda_{l,j}^i(x) (w_j-p_j(x))-p_l(T^i(x))+\sum_{j=1}^{n_i}\mu_{l,j}^i(x)r_j^i(x)+\sum_{j=1}^{n_0}\gamma_{l,j}^i(x)r_j^0(x)\in \K \, .
\]
\item Property satisfiability: 
\[\alpha -\kappa(x) -\sum_{t=1}^k \nu_{t}(x) (w_t-p_t(x))\in \K \enspace.\]
\end{enumerate}
\end{definition}
For the sake of presentation, let us define for all $l=1,\ldots, k$, for all $i\in\ind$:
\begin{equation}
\label{auxiliaryineq}
\left\{
\begin{array}{l}
\displaystyle{S_l^i:x\mapsto}\\
\displaystyle{w_l-\sum_{j=1}^k \lambda_{l,j}^i(x)
(w_j-p_j(x))-p_l(T^i(x))+\sum_{j=1}^{n_i}\mu_{l,j}^i(x)r_j^i(x)+\sum_{j=1}^{n_0}\gamma_{l,j}^i(x)r_j^0(x)}\enspace, \\
\displaystyle{S^\kappa:x\mapsto \alpha -\kappa(x) -\sum_{t=1}^k \nu_{t}(x) (w_t-p_t(x))}  \enspace. 
\end{array}
\right.
\end{equation}
\begin{example}[$\K$ well-representative template basis]
\label{strongex}
Consider Example~\ref{running}. We are interested in proving the boundedness of the values taken 
by the variables of the program. For $x=(x_1,x_2)$, let consider $\kappa(x)=\norm{x}_2^2=x_1^2+x_2^2$. Recall that
$\xin= [a_1, a_2] \times [b_1, b_2],\ 
X^0=\rd,\ X^1=\{x\in\rr^2\mid x_1^2+x_2^2\leq 1\},\quad X^2=\{x\in\rr^2\mid -x_1^2-x_2^2< -1\} $, $T^1(x_1,x_2)=(c_{11}x_1^2+c_{12}x_2^3,c_{21}x_1^3+c_{22}x_2^2)$
and $T^2(x_1,x_2)=(d_{11}x_1^3+d_{12}x_2^2,d_{21}x_1^2+d_{22}x_2^2)$. Let $\K=\FRP$ and $\{p\}$ be a singleton template basis. Then $\{p\}$ is $\K$ well-representative w.r.t. the CPDS $(\xin,X^0,\{X^1,X^2\},\{T^1,T^2\})$ and $\prop{\kappa}$
iff there exists $w\in\rr$, $\alpha\in\rr_+$, $\nu\in\FRP$, $\lambda^1,\lambda^2\in\FRP$ and $\gamma^1,\gamma^2\in\FRP$ such that:
\[
\left\{
\begin{array}{l}
w\geq \sup_{y\in [a_1, a_2] \times [b_1, b_2]} p(y) \,,\\
\forall x\in\rr^2,\ w-\lambda^1(x) (w-p(x))-p(T^1(x))
+\gamma^1(x)(\norm{x}_2^2-1)\geq 0 \, ,\\
\forall x\in\rr^2,\ w-\lambda^2(x) (w-p(x))-p(T^2(x))
+\gamma^2(x)(1-\norm{x}_2^2)\geq 0 \, ,\\
\forall x\in\rr^2,\ \alpha -\norm{x}_2^2 -\nu(x) (w-p(x))\geq 0 \, .
\end{array}
\right.
\]
\end{example}
Note that generating inductive invariants is well known to yield undesirable nonlinear optimization problems (e.g. bilinearity, as in~\cite{Sriram3}). Here nonlinearity is avoided by fixing the parameters $\{\lambda^i,i\in\ind\}\subseteq\K^{k\times k}$ and $\nu\in \K^k$ to 1, so that 
the two last inequalities of Definition~\ref{kwell} become linear in the variables $p_1,\ldots,p_k$, $w_1,\ldots,w_k$,
$\alpha$ and the parameters $\{\mu^i,i\in\ind\},\{\gamma^i,i\in\ind\}\in\K^k$.  
\if{
But the next proposition Proposition~\ref{linear} tends to prove that to fix certain parameters 
makes the problem easier and even linear in the template bases. 
\begin{proposition}[Linearisation of $\K$ well-representative conditions]
\label{linear}
\begin{enumerate}
\item For all $\lambda^i,\lambda^e\in\K^{k\times k}$ and for all $l=1,\ldots,k$: 
\[
\begin{array}{l}
\displaystyle{(\pp,\mu^i,\gamma^i,w)\mapsto w_l-\sum_{j=1}^k \lambda_{l,j}^i(x) (w_j-p_j(x))-p_l(T(x))
+\mu^i(x)r(x)+\gamma^i(x)s(x)}\\
\text{and}\\
\displaystyle{(\pp,\mu^e,\gamma^e,w)\mapsto w_l-\sum_{j=1}^k \lambda_{l,j}^e(x) (w_j-p_j(x))-p_l(T(x))
+\mu^e(x)r(x)+\gamma^e(x)s(x)}
\end{array}
\]
is linear (w.r.t. $\FR^{k+1+1}\times \rr^k$).  
\item For all $\nu\in\K^{n\times k}$ and for all $j=1,\ldots,k$:
\[
(\pp,\alpha,w)\mapsto \alpha -\kappa(x) -\sum_{l=1}^k \nu_j(x) (w_j-p_j(x))
\]
is linear (w.r.t. $\FR^{k}\times \rr^{1+k}$) .
\end{enumerate}
\end{proposition}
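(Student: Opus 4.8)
The plan is to show that, once the multiplier parameters $\lambda^i,\lambda^e$ (first item) and $\nu$ (second item) are frozen, each displayed expression is a finite sum of terms, every one of which is linear in exactly one block of the argument tuple, so that the whole map is linear by additivity. First I would record that the relevant domains are genuine $\rr$-vector spaces: $\FR$ under pointwise addition and scalar multiplication, together with $\rr^k$ and $\rr$, and that finite products of vector spaces are again vector spaces under componentwise operations. A map valued in $\FR$ is then linear as soon as it is additive and homogeneous, which I will verify term by term, using repeatedly that pointwise multiplication by a \emph{fixed} function and precomposition by a \emph{fixed} map are linear operations on $\FR$.

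For the first item, with $\lambda^i$ (resp. $\lambda^e$) fixed and the CPDS data $T,r,s$ fixed, I would split
\[
w_l-\sum_{j=1}^k \lambda_{l,j}^i(x)\,(w_j-p_j(x))-p_l(T(x))+\mu^i(x)\,r(x)+\gamma^i(x)\,s(x)
\]
into its additive constituents and check linearity of each in the tuple $(\pp,\mu^i,\gamma^i,w)$: the term $w_l$ is the $l$-th coordinate projection of the $w$-block; for each $j$ the term $\lambda_{l,j}^i\,w_j$ is a fixed function times a coordinate of $w$, hence linear in $w$; the term $\lambda_{l,j}^i\,p_j$ is the pointwise product of the fixed function $\lambda_{l,j}^i$ with the unknown $p_j$, hence linear in the $\pp$-block; the term $p_l\circ T$ is linear in $p_l$ since $(a p+b q)\circ T=a\,(p\circ T)+b\,(q\circ T)$ pointwise; finally $\mu^i\,r$ and $\gamma^i\,s$ are pointwise products of the unknown multipliers with the fixed tests, hence linear in $\mu^i$ and $\gamma^i$ respectively. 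Each summand is linear in one block and independent of the others, hence linear on the product space, and a finite sum of linear maps is linear; the $\lambda^e$ case is identical.

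For the second item the same decomposition applies to
\[
\alpha-\kappa(x)-\sum_{l=1}^k \nu_j(x)\,(w_j-p_j(x)),
\]
with $\nu$ frozen: $\alpha$ is linear in the scalar block, while $\nu_j\,w_j$ and $\nu_j\,p_j$ are linear in $w$ and in $\pp$ exactly as above. The only remaining term $-\kappa$ is a fixed function independent of the unknowns, so the map is a sum of linear terms plus the constant $-\kappa$, i.e. affine. I would note that this is ``linear'' in the sense relevant to the LP/SOS feasibility problems considered here, the constant $\kappa$ merely shifting the right-hand side of the constraint.

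I do not expect a genuine obstacle, as the content is bookkeeping. The one point deserving emphasis, and really the whole purpose of the statement, is that freezing $\lambda^i$ (resp. $\nu$) is precisely what removes the bilinear products $\lambda_{l,j}^i\,w_j$ and $\lambda_{l,j}^i\,p_j$, which are products of two unknowns in Definition~\ref{kwell}; it is the elimination of these bilinearities, rather than any deeper structure, that yields linearity. The only subtlety to flag is the constant $-\kappa$ term in the second item, which makes that map affine rather than strictly homogeneous, so I would make the intended meaning of ``linear'' explicit.
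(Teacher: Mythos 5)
Your proof is correct and matches the paper's intent exactly: the paper gives no proof of this proposition (in the final source it is in fact disabled, surviving only as the remark that fixing the parameters $\{\lambda^i, i\in\ind\}$ and $\nu$ to $1$ makes the last two constraints of Definition~\ref{kwell} linear in $p_1,\ldots,p_k$, $w_1,\ldots,w_k$, $\alpha$ and the remaining multipliers), and your term-by-term verification---multiplication by a fixed function and precomposition by a fixed $T$ are linear on $\FR$, and freezing the $\lambda$ and $\nu$ blocks is what kills the bilinear products of unknowns---is precisely the routine bookkeeping the paper treats as self-evident. Your flag that the $-\kappa$ term makes the second map affine rather than homogeneous-linear is a legitimate sharpening of the statement, consistent with how the paper uses ``linear'' (linearity of the constraint in the decision variables), not a deviation from its argument.
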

\begin{example}[Linearisation with two templates]
We come back to Example~\ref{strongex}. Recall that $x=(x_1,x_2)$ and $\kappa(x)=\norm{x}_2^2=x_1^2+x_2^2$,
$T^i(x_1,x_2)=(x_1^2+x_2^3,x_1^3+x_2^2)$ and $T^e(x_1,x_2)=(0.5x_1^3+0.4x_2^2,-0.6x_1^2+0.3x_2^2)$. 
Let us consider $\K=\FRP$ and $\pp=\{p_1,p_2\}$. We set $\lambda_1^i(x_1,x_2)=(1,1)$ (in the sense of for the inequality 
involving $p_1$, the weight associated to $p_1$ is 1 and the one associated to $p_2$ is 1), $\lambda_2^i(x_1,x_2)=(1,0)$,
$\lambda_1^e(x_1,x_2)=\lambda_2^e(x_1,x_2)=(1,1)$
and $\nu_{1}(x_1,x_2)=1$ and $\nu_{2}(x_1,x_2)=1$ (for the inequality involving the abstract sublevel 
property, the weights associated to each template are equal to 1). The linearisation consists in the system 
of inequalities:
\[
\left\{
\begin{array}{l}
w_1\geq \sup_{y\in [0.9, 1.1] \times [0, 0.2]} p_1(y)\\
\\
w_2\geq \sup_{y\in [0.9, 1.1] \times [0, 0.2]} p_2(y)\\
\\
\forall x\in\rr^2,\ p_1(x)  +p_2(x) - w_2-p_1(T^i(x))
+\gamma_1^i(x)(\|x\|_2^2-1)\geq 0\\
\\
\forall x\in\rr^2,\ p_1(x) -w_1 +p_2(x)-p_2(T^i(x))
+\gamma_2^i(x)(\|x\|_2^2-1)\geq 0\\
\\
\forall x\in\rr^2,\ p_1(x)  +p_2(x) - w_2-p_1(T^e(x))
+\gamma_1^e(x)(1-\|x\|_2^2)\geq 0\\
\\
\forall x\in\rr^2,\ p_1(x) -w_1 +p_2(x)-p_2(T^e(x))
+\gamma_2^e(x)(1-\|x\|_2^2)\geq 0\\
\\
\forall x\in\rr^2,\ \alpha -\kappa(x) -(w_1-p_1(x))-(w_2-p_2(x))\geq 0
\end{array}
\right.
\]
\end{example}
}\fi

The next lemma states that $\K$ \well templates 
bases are \well template bases. This result is an application of S-Lemma with ``nonnegative 
functions multipliers''.
\begin{lemma}[Functional S-Lemma]
\label{funslemma}
Let $\delta,\beta_1,\ldots,\beta_n \in \rr$ and $h,g_1,\ldots,g_n \in \FR$.
If there exists $\lambda\in \FRP^n$ such that
\begin{equation}
\label{ineq-s-lemma}
\delta-h(x)-\sum_{i=1}^n \lambda_i(x)(\beta_i-g_i(x))\geq 0\enspace ,
\end{equation}
then 
\begin{equation}
\label{impl-s-lemma}
\forall x \in \rd \, , (g_1(x)\leq \beta_1\wedge \ldots\wedge g_n(x)\leq \beta_n \implies h(x)\leq \delta) \, .
\end{equation}
\end{lemma}
\begin{proof}
Assuming that the inequality~\eqref{impl-s-lemma} holds for some $\lambda\in \FRP^n$, we obtain $\delta-h(x)\geq \sum_{i=1}^n \lambda_i(x)(\beta_i-g_i(x))$.
The positivity of $\lambda_i$ yields the desired result.
\qed
\end{proof}
\if{
\begin{corollary}
When $C$ can be represented as $\{x\in\rd\mid g_j(x)\leq 0,\ \forall\, j=1,\ldots,n_C\}$ where 
$g_j\in\FR$. Let $\pp=\{p_1,\ldots,p_k\}$ be a finite template basis. If there exists $\sigma\in \K^{k\times n_C}$
such that for all $l=1,\ldots,k$: 
$x\mapsto w_l-p_l(x)-\sum_{j=1}^{n_C} \sigma_{l,j}(x)g_j(x)\in\K$ then $w_l\geq \sup_{x\in C} p_l(x)$.  
\end{corollary}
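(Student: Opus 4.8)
The statement is an immediate corollary of the Functional S-Lemma (Lemma~\ref{funslemma}), applied separately to each template $p_l$ and then combined with an elementary supremum argument. The plan is therefore in two movements: first I would turn the single functional certificate into the pointwise bound ``$p_l(x)\le w_l$ on $C$'' via Lemma~\ref{funslemma}, and then I would upgrade this pointwise bound to the inequality $w_l\ge\sup_{x\in C}p_l(x)$ that the corollary actually asserts, relating a constant to a supremum.

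Fix $l\in\{1,\dots,k\}$. I would invoke Lemma~\ref{funslemma} with $n:=n_C$, $h:=p_l$, $\delta:=w_l$, $\beta_j:=0$ and $g_j:=g_j$ for $j=1,\dots,n_C$, taking as multipliers $\lambda_j:=\sigma_{l,j}$. These multipliers are admissible: since $\sigma\in\K^{k\times n_C}$ and $\K\subseteq\FRP$, each $\sigma_{l,j}$ is a nonnegative function, so $\lambda\in\FRP^{n_C}$ as Lemma~\ref{funslemma} requires. With $\beta_j=0$ the quantity $\beta_j-g_j(x)$ equals $-g_j(x)$, so the relevant instance of certificate~\eqref{ineq-s-lemma} is the pointwise inequality $w_l-p_l(x)+\sum_{j=1}^{n_C}\sigma_{l,j}(x)g_j(x)\ge 0$. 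The corollary's hypothesis states that a combination of $w_l$, $p_l$ and the $\sigma_{l,j}g_j$ lies in $\K\subseteq\FRP$ and is thus pointwise nonnegative; matching it to the certificate above is exactly the sign point I flag below. Granting this, Lemma~\ref{funslemma} supplies implication~\eqref{impl-s-lemma}: for every $x\in\rd$, $g_1(x)\le 0\wedge\cdots\wedge g_{n_C}(x)\le 0\implies p_l(x)\le w_l$. Since $C=\{x\in\rd\mid g_j(x)\le 0,\ \forall j\}$, this reads $p_l(x)\le w_l$ for every $x\in C$.

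It remains to produce the inequality on suprema, which is the genuine content of the corollary and the step that must not be skipped. Because $w_l\in\rr$ is a constant independent of $x$ and $p_l(x)\le w_l$ holds for every $x\in C$, the number $w_l$ is an upper bound of the set $\{p_l(x)\mid x\in C\}$. As $\sup_{x\in C}p_l(x)$ is by definition the least upper bound of that set (with the convention $\sup\emptyset=-\infty$ when $C$ is empty), I conclude $\sup_{x\in C}p_l(x)\le w_l$, that is $w_l\ge\sup_{x\in C}p_l(x)$. Repeating the argument for each $l=1,\dots,k$ finishes the proof, and specialising to $C=\xin$ recovers the ``initial condition satisfiability'' clause of Definition~\ref{kwell}, which is the intended use.

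The analytical part is light; the one step demanding real care, and the one I expect to be the main obstacle, is the sign bookkeeping of the multiplier term. The S-Lemma is phrased with the combination $\beta_j-g_j(x)$, so for the representation $C=\{g_j\le 0\}$ (whence $\beta_j=0$) the constraint functions must enter the certificate with a plus sign, i.e.\ as $+\sigma_{l,j}(x)g_j(x)$, in agreement with the ``local branch satisfiability'' conditions of Definition~\ref{kwell}. I would align the sign of the $\sum_j\sigma_{l,j}g_j$ term to this convention before invoking Lemma~\ref{funslemma}; with the opposite sign the very same certificate instead bounds $p_l$ over $\{x\in\rd\mid g_j(x)\ge 0\}$, so matching the sign of the multiplier sum to the direction of the inequalities defining $C$ is the decisive check and the only place where the corollary is not purely mechanical.
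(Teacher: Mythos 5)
Your proof is correct and takes essentially the same route as the paper's, which disposes of the corollary in one line by remarking that $w_l\geq \sup_{x\in C} p_l(x)$ is equivalent to the implication $g_j(x)\leq 0,\ j=1,\ldots,n_C \implies p_l(x)\leq w_l$ and invoking Lemma~\ref{funslemma} directly, exactly as you do. Your sign flag is moreover well taken: as printed, the hypothesis $w_l-p_l(x)-\sum_{j=1}^{n_C}\sigma_{l,j}(x)g_j(x)\in\K$ carries a typo, since Lemma~\ref{funslemma} with $\beta_j=0$ requires the multiplier term with a plus sign (as in Definition~\ref{kwell} and Equation~\eqref{eq:singlesos}), the minus-sign version certifying the bound only over $\{x\in\rd\mid g_j(x)\geq 0\}$ --- a point the paper's one-line proof passes over silently.
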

\begin{proof}
It suffices to remark that $w_l\geq \sup_{x\in C} p_l(x)$ is equivalent to $g_j(x)\leq 0, j=1,\ldots,n_C \implies 
p_l(x)\leq w_l$ and the result is a direct application of Lemma~\ref{funslemma}.
\qed
\end{proof}
}\fi
\begin{theorem}[$\K$ \well is \well]
\label{soswell}
Assume that a finite template basis $\pp$ is $\K$ \well w.r.t. $\pws$ and $\prop{\kappa}$. Then $\pp$ is \well w.r.t. $\pws$ and $\prop{\kappa}$. 
\end{theorem}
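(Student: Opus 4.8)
The plan is to exhibit an explicit feasible invariant bound $\widetilde w\in\vep$ whose $\pp$-sublevel set is contained in $\prop{\kappa}$, thereby matching Definition~\ref{boundwellrepre}. From the $\K$ \well hypothesis I am handed a vector $w\in\rr^k$, a scalar $\alpha$, and the nonnegative multipliers $\lambda^i,\mu^i,\gamma^i,\nu$ satisfying the three conditions of Definition~\ref{kwell}. First I would assemble the scalars $w_l$ into a single function $\bar w\in\fr$ by setting $\bar w(p_l)=w_l$ for $l=1,\ldots,k$ (legitimate since $\pp=\{p_1,\ldots,p_k\}$ is finite), so that its concretisation is
\[
\bar w^\star=\{x\in\rd\mid p_l(x)\leq w_l,\ \forall\, l=1,\ldots,k\}\,.
\]
As $\bar w$ need not be $\pp$-convex, I would instead work with $\widetilde w:=(\bar w^\star)^\dag$, which lies in $\vep$ by the second identity in~\eqref{galoisprop} and satisfies $\widetilde w^\star=\bar w^\star$ by the first identity there. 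Every set-level statement proved for $\bar w^\star$ then holds verbatim for $\widetilde w^\star$.

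Next I would verify the two hypotheses of Proposition~\ref{inductive} for $\widetilde w$. The inclusion $\xin\subseteq\widetilde w^\star$ is immediate from initial condition satisfiability, since $w_l\geq\sup_{y\in\xin}p_l(y)$ forces $p_l(y)\leq w_l$ for every $y\in\xin$ and every $l$. For the inductive step, fix $i\in\ind$ and $x\in\widetilde w^\star\cap X^i\cap X^0$; then $w_j-p_j(x)\geq0$ for all $j$, while $r_j^i(x)\leq0$ and $r_j^0(x)\leq0$ (membership in $X^i$ under strict tests still yields the weak inequalities). This is exactly the hypothesis side of the Functional S-Lemma (Lemma~\ref{funslemma}): local branch satisfiability asserts $S_l^i\in\K\subseteq\FRP$, i.e. $S_l^i(x)\geq0$ everywhere, with the nonnegative multipliers $\lambda_{l,\cdot}^i,\mu_{l,\cdot}^i,\gamma_{l,\cdot}^i$ serving as the S-Lemma certificate for the implication $p_l(T^i(x))\leq w_l$ (the signs in $S_l^i$ align with $\delta-h-\sum\lambda_i(\beta_i-g_i)$ once one reads $\beta=w_j$ against $g=p_j$ and $\beta=0$ against $g=r_j^i,r_j^0$). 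Applying this for each $l$ gives $T^i(x)\in\widetilde w^\star$, and Proposition~\ref{inductive} then yields $\widetilde w\in\feas{\pws}$.

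It remains to check $\widetilde w^\star\subseteq\prop{\kappa}$, which I would obtain from property satisfiability in the same manner: for $x\in\widetilde w^\star$ each $w_t-p_t(x)$ is nonnegative and $\nu_t\in\K\subseteq\FRP$, so $S^\kappa(x)\geq0$ rearranges to $\kappa(x)\leq\alpha$ — again Lemma~\ref{funslemma} with $h=\kappa$, $\delta=\alpha$ and multipliers $\nu_t$. Hence every $x\in\widetilde w^\star$ lies in $\{y\in\rd\mid\kappa(y)\leq\alpha\}\subseteq\prop{\kappa}$. Combined with $\widetilde w\in\feas{\pws}$, this shows $\widetilde w$ is the witness demanded by Definition~\ref{boundwellrepre}, so $\pp$ is \well w.r.t. $\pws$ and $\prop{\kappa}$.

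The sign-chasing is the bulk of the argument but is routine once the multipliers are recognised as S-Lemma certificates. I expect the only genuinely delicate point to be the $\pp$-convexity bookkeeping: Proposition~\ref{inductive} is stated for elements of $\vep$, whereas the natural candidate $\bar w$ built from the raw scalars $w_l$ need not be $\pp$-convex, so one must pass to $(\bar w^\star)^\dag$ and invoke~\eqref{galoisprop} to guarantee that the sublevel set — and therefore every inclusion verified above — is preserved. A secondary point worth stating explicitly is that the strict tests cutting out $X^i$ still entail the weak inequalities $r_j^i(x)\leq0$ consumed by the S-Lemma.
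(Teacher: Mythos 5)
Your proof is correct and follows essentially the same route as the paper's: build the bound $v$ from the scalars $w_l$ (your $\bar w$), pass to its $\pp$-convexification $(\bar w^\star)^\dag$ via Equation~\eqref{galoisprop}, use Lemma~\ref{funslemma} to turn the memberships $S_l^i\in\K\subseteq\FRP$ and $S^\kappa\in\K$ into the implications required by Proposition~\ref{inductive} and the inclusion $\widetilde w^\star\subseteq\{x\mid\kappa(x)\leq\alpha\}\subseteq\prop{\kappa}$. If anything, your handling of the $\pp$-convexity bookkeeping (applying Proposition~\ref{inductive} to $\widetilde w\in\vep$ rather than to the raw $v$) is slightly more careful than the paper's own wording.
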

\begin{proof}
$\pp=\{p_1,\ldots,p_k\}$ is $\K$ well-representative. Then there exists 
$w\in\rr^k$, $\alpha\in\rr$ and $\nu\in\K^k$ and for all $i\in\ind$, $\lambda^i\in\K^{k\times k}$, $\mu^i\in\K^{k\times n_i}$, $\gamma^i\in \K^{k\times n_0}$
such that, for all $l=1,\ldots,k$, for all $i\in\ind$, $S_l^i\in\K$, $S^\kappa\in\K\subseteq \FRP$ ($S_l^i\in\K$ and $S^\kappa$ defined at Equation~\eqref{auxiliaryineq}) and 
$w_l\geq \sup\{p_l(x)\mid x \in \xin\}$. We set, for all $l=1,\ldots,k$, $v(p_l):=w_l$.
From Proposition~\ref{Galois}, $v(p_l)\geq \sup\{p_l(x)\mid x \in \xin\}$ for all $l=1,\ldots,k$ is equivalent 
to $\xin\subseteq v^\star$ and $S_l^i\in\K\subseteq \FRP$ for all $l=1,\ldots,k$ and for all $i\in\ind$ imply 
 respectively, by Lemma~\ref{funslemma} 
for all $i\in\ind$, $\left(x\in v^{\star}\wedge r^i(x)\leq 0\wedge r^0(x)\leq 0\implies T^i(x)\in v^{\star}\right)$.
Taking $\clo{v}=(v^\star)^\dag$, we have from Equation~\eqref{galoisprop}, $\clo{v}\in\vep$ and $\clo{v}^\star=v^\star$. 
By Proposition~\ref{inductive}, $v\in\feas{\pws}$. Finally $S^\kappa\in\K\subseteq \FRP$ implies that $v^*\subseteq \{x\in\rd\mid\kappa(x)\leq \alpha\}\subseteq \prop{\kappa}$ by Lemma~\ref{funslemma}. 
\qed
\end{proof}
This proof exhibits a feasible invariant bound which is given by the variable $w$ of the system of
inequalities in Definition~\ref{kwell}.  
\subsection{Simple construction of $\K$ well-representative template bases}
\label{simplemethods}
In this subsection, we discuss how to simply construct $\K$ well-representative template bases. 
\begin{proposition}[With one $\K$ well-representative template]
\label{onlyone}
Let $\{p\}$ be a $\K$ \well template basis w.r.t. $\pws$ and $\prop{\kappa}$ and 
$\mathcal{Q}$ be a finite subset of $\FR$ $\st$ for all $q\in\mathcal{Q}$, $p-q\in\K$, for all $i\in\ind$, $(p-q)\circ T^i\in\K$. Then $\pp=\{p\}\cup \mathcal{Q}$ is a $\K$ \well template basis w.r.t. $\pws$ and 
$\prop{\kappa}$.  
\end{proposition}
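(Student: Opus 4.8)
The plan is to build, directly from the witness data certifying that $\{p\}$ is $\K$ \well, all of the data required by Definition~\ref{kwell} for the enlarged basis $\pp=\{p\}\cup\mathcal{Q}$. Write $\mathcal{Q}=\{q_1,\ldots,q_m\}$ and order the templates as $(p_1,\ldots,p_k)=(p,q_1,\ldots,q_m)$, so $k=m+1$. By hypothesis there exist $w_p\in\rr$, $\alpha\in\rr$, $\nu_p\in\K$ and, for each $i\in\ind$, a scalar $\lambda_p^i\in\K$ together with $\mu_p^i\in\K^{n_i}$ and $\gamma_p^i\in\K^{n_0}$ satisfying the three conditions of Definition~\ref{kwell} for the singleton basis.

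First I would fix the candidate witnesses for $\pp$. I take $w:=(w_p,\ldots,w_p)\in\rr^k$, keep the same $\alpha$, and set $\nu:=(\nu_p,0,\ldots,0)\in\K^k$. For each $i\in\ind$ and each row $l=1,\ldots,k$ I copy the multipliers of the $p$-row, namely $\mu_{l,\cdot}^i:=\mu_p^i$, $\gamma_{l,\cdot}^i:=\gamma_p^i$, and $\lambda_{l,\cdot}^i:=(\lambda_p^i,0,\ldots,0)$ (only the first column, the one multiplying $w_1-p_1=w_p-p$, is nonzero). All these entries lie in $\K$ because $0\in\K$ (take $t=0$ in item~1, or $c=0$ in item~3, of Definition~\ref{convexcones}).

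The verification splits according to the three conditions. For initial condition satisfiability, $p-q_r\in\K\subseteq\FRP$ gives $q_r\leq p$ pointwise, hence $\sup_{y\in\xin}q_r(y)\leq\sup_{y\in\xin}p(y)\leq w_p$, while the $p$-row holds by assumption. For property satisfiability, since $\nu_t=0$ for $t\geq 2$, the expression $S^\kappa$ of Equation~\eqref{auxiliaryineq} collapses to $\alpha-\kappa(x)-\nu_p(x)(w_p-p(x))$, which is the original certificate and lies in $\K$.

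The crux is the branch condition, and it is the one step I would set up carefully. With the chosen multipliers, the quantity $S_l^i$ of Equation~\eqref{auxiliaryineq} for row $l$ becomes $w_p-\lambda_p^i(x)(w_p-p(x))-p_l(T^i(x))+\sum_{j}\mu_{p,j}^i(x)r_j^i(x)+\sum_{j}\gamma_{p,j}^i(x)r_j^0(x)$, which differs from the singleton branch expression $S_p^i$ only in the update term; that is, $S_l^i=S_p^i+(p-p_l)\circ T^i$. For $l=1$ this is exactly $S_p^i\in\K$, and for $l=r+1$ the extra summand is $(p-q_r)\circ T^i$, which lies in $\K$ by hypothesis, so closure of $\K$ under addition (item~2 of Definition~\ref{convexcones}) yields $S_l^i\in\K$. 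All three conditions of Definition~\ref{kwell} thus hold, proving that $\pp$ is $\K$ \well w.r.t. $\pws$ and $\prop{\kappa}$. The only delicate point is this additive identity for $S_l^i$ together with the observation that the assumption $(p-q)\circ T^i\in\K$ is precisely what makes the extra update term fall back into the cone.
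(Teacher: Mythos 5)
Your proof is correct and follows essentially the same route as the paper's: identical witness choices (duplicating $w_p$, copying the $\mu^i,\gamma^i$ rows, putting the $\lambda$ and $\nu$ weight only on the $p$-column with zeros elsewhere) and the same key decomposition $S_l^i = S_p^i + (p-p_l)\circ T^i$ resolved by closure of $\K$ under addition. The only difference is cosmetic: you verify all of $\mathcal{Q}$ in one shot via the $k\times k$ matrix structure, whereas the paper treats $\mathcal{Q}=\{q\}$ and concludes by induction on the elements of $\mathcal{Q}$.
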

\begin{proof}
Suppose that $\{p\}$ is $\K$ \well w.r.t. $\prop{\kappa}$. 
By definition, there exists $w\in\rr$, $\alpha\in\rr$ and $\nu\in\K$ and for all $i\in\ind$, $\lambda^i\in\K$, $\mu^i\in \K^{1\times n_i}$, $\gamma^i,\in\K^{1\times n_0}$, $\nu\in\K$ such that the functions for all $i\in\ind$, $S^i := S_1^i$, $S^\kappa$
belong to $\K$ ($S_1^i\in\K$ and $S^\kappa$ defined at Equation~\eqref{auxiliaryineq}) and $w\geq \sup\{p(x)\mid x\in \xin\}$. Let us take $q$ such that $p-q\in\K$. It follows that $p\geq q$ and thus:
$w\geq \sup\{p(x)\mid x\in \xin\} \geq \sup\{q(x)\mid x\in \xin\}$. Now let $i\in\ind$, since $(p-q)\circ T^i\in \K$ then there exists $f\in\K$ such that $f(x)=p(T^i(x))-q(T^i(x))$ for all $x\in\rd$, we have $w(1-\lambda^i(x))-q(T(x))+\lambda^i(x)p(x)+\sum_{j=1}^{n_i}\mu_j^i(x) 
r_j^i(x)+\sum_{j=1}^{n_0}\gamma_j^i(x)r_j^0(x)=S^i(x)+ f(x)$ for all $x\in\rd$. Since $\K$ is closed under addition then $S^i+ f\in \K$. Now $S^\kappa\in\K$ implies that $S^\kappa+0(w-q)\in\K$.
It follows that $\{p,q\}$ is $\K$ \well w.r.t. $\pws$ and $\prop{\kappa}$ by taking
$(w,w)\in\rr^2$, $\alpha\in\rr$, $(\nu,0)\in\K^2$ and for all $i\in\ind$, $\{(\lambda^i,0),(\lambda^i,0)\}\in\K^{2\times 2}$, 
$(\mu^i,\mu^i)\in\K^{2\times n_i},(\gamma^i,\gamma^i)\in\K^{2\times n_0}$ (following
the order of the parameters of Definition~\ref{kwell}). We conclude by induction on the elements $q$.\qed
\end{proof}

\begin{example}[With Quadratic Lyapunov Functions]
\label{lyapunov}
Let us consider the following program:
\begin{center}
\begin{tabular}{c}
\begin{lstlisting}[mathescape=true]
x $\in$ $\xin$;
while (-1<=0){
   x = $A$x;
}
\end{lstlisting}
\end{tabular}
\end{center}
where $\xin$ is a bounded set, $A$ is a $d\times d$ matrix. Its CPDS representation is $S=(\xin,\rd,\rd,Ax)$.
Suppose there exists a symmetric matrix $P$ such that: 
\begin{equation}
\label{eqlyap}
P-\Id\succeq 0\qquad P-A^\intercal P A\succeq 0
\end{equation}
 where $B-C\succeq 0$ for two symmetric matrices means that $x^\intercal (B-C) x\geq 0$ 
for all $x$ and $\Id$ is the identity matrix. Let $k=1,\ldots d$ and let us denote by $I_k$ the $d\times d$ matrix
such that $I_k(i,j)=1$ if $i=j=k$ and 0 otherwise. Remark that $\Id-I_k\succeq 0$ for all $k=1,\ldots d$.
  
Let $\K=\{x\mapsto x^\intercal Q x+c\mid c\in\rr_+,Q\succeq 0\}$.
Then $\pp=\{x\mapsto x^\intercal P x\}\cup\{x\mapsto x^\intercal I_k x, k=1,\ldots,d\}$ 
is a $\K$ well-representative template basis w.r.t. $S$ and $\prop{\|\cdot\|_2^2}$.

We write $\beta:=\sup\{x^\intercal Px \mid x\in \xin\}\in\rr$ (since $\xin$ is bounded and $x\mapsto x^\intercal P x$
is continuous). We have to exhibit $w,\alpha\in\rr$ and $\lambda,\nu\in\K$ such that:  
$w\geq \beta$, $x\mapsto w-\lambda(x)(w-x^\intercal P x)-x^\intercal A^\intercal PAx\in \K$
and $x\mapsto \alpha -\norm{x}_2^2-\nu(x)(w-x^\intercal P x)\in \K$.
Taking $\lambda=\nu=1$ and $\alpha=w=\beta$, the latter inequalities become  
$P-A^\intercal PAx\succeq 0$ and $x\mapsto-\norm{x}_2^2+x^\intercal P x\geq 0$. So $-\norm{x}_2^2+x^\intercal P x=x^\intercal (P-\Id) x\in\K$. Thus, 
$\{x\mapsto x^\intercal P x\}$ is a $\K$ well-representative template basis w.r.t. $S$ and $\prop{\|\cdot\|_2^2}$. 
Now $P-Id\succeq 0$ implies that $P-I_k\succeq 0$ and then $x^\intercal P x-x^\intercal I_k x\in\K$. For all $k=1,\ldots d$,
for all $x\in\rd$, $x^\intercal A^\intercal P Ax-x^\intercal A^\intercal I_k A x=x^\intercal A^\intercal P-I_k A x\in\K$.
By Proposition~\ref{onlyone}, a $\K$ well-representative template basis w.r.t. $S$ and $\prop{\|\cdot\|_2^2}$.

This example shows that the quadratic forms (Lyapunov functions for discrete-time linear systems) $x\mapsto x^\intercal P x$ for $P$ satisfying Equation~\eqref{eqlyap} combined with $x\mapsto x_k^2$ are used in the setting of quadratic templates.
\end{example}
Another possibility consists in 
constructing a $\K$ \well template basis w.r.t. $\pws$ and $\prop{\kappa}$ from a vector of 
templates $p_1, \dots, p_k$ such that for all $i = 1, \dots, k$, $\{p_i\}$ is a $\K$ \well templates w.r.t. $\pws$ and $\prop{\kappa}$ (Proposition~\ref{twowell}).
\begin{proposition}[From two single $\K$ well-representative templates]
\label{twowell}
Let $\{p\}$ and $\mathcal{Q}$ two $\K$ well-representative template bases w.r.t. $\pws$ and $\prop{\kappa}$.
Then $\{p\}\cup \mathcal{Q}$ is a $\K$ well-representative template basis w.r.t. $\pws$ and $\prop{\kappa}$.
\end{proposition}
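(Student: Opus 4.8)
The plan is to assemble a joint certificate for $\{p\}\cup\mathcal{Q}$ out of the two certificates guaranteed by hypothesis, using a block decoupling in the spirit of the proof of Proposition~\ref{onlyone}. Write $\mathcal{Q}=\{q_1,\dots,q_m\}$, so the union has $k=m+1$ elements, ordered as $p_1=p$ and $p_{1+l}=q_l$ for $l=1,\dots,m$. From Definition~\ref{kwell} applied to $\{p\}$ I would extract $w^p\in\rr$, $\alpha^p\in\rr$, $\nu^p\in\K$, and for each $i\in\ind$ the multipliers $\lambda^{p,i}\in\K$, $\mu^{p,i}\in\K^{1\times n_i}$, $\gamma^{p,i}\in\K^{1\times n_0}$; similarly from $\mathcal{Q}$ I would extract $w^{\mathcal{Q}}\in\rr^m$, $\alpha^{\mathcal{Q}}\in\rr$, $\nu^{\mathcal{Q}}\in\K^m$, and the matrices $\lambda^{\mathcal{Q},i}\in\K^{m\times m}$, $\mu^{\mathcal{Q},i}\in\K^{m\times n_i}$, $\gamma^{\mathcal{Q},i}\in\K^{m\times n_0}$.

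Next I would define the joint certificate by padding and stacking. Set $w=(w^p,w^{\mathcal{Q}}_1,\dots,w^{\mathcal{Q}}_m)\in\rr^{m+1}$, and inherit the property data from $\{p\}$ alone, taking $\alpha=\alpha^p$ and $\nu=(\nu^p,0,\dots,0)\in\K^{m+1}$; the padding zeros are the scalar $0\in\K$, which is legitimate by Definition~\ref{convexcones}(3). For each $i\in\ind$ I would build $\lambda^i\in\K^{(m+1)\times(m+1)}$ as the block-diagonal matrix with diagonal blocks $\lambda^{p,i}$ and $\lambda^{\mathcal{Q},i}$ and all cross entries equal to $0$, and stack the test multipliers as $\mu^i=\left(\begin{smallmatrix}\mu^{p,i}\\\mu^{\mathcal{Q},i}\end{smallmatrix}\right)\in\K^{(m+1)\times n_i}$ and $\gamma^i=\left(\begin{smallmatrix}\gamma^{p,i}\\\gamma^{\mathcal{Q},i}\end{smallmatrix}\right)\in\K^{(m+1)\times n_0}$.

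Then I would verify the three conditions of Definition~\ref{kwell} in turn. The initial condition is immediate, since each component of $w$ already dominates the supremum of its own template over $\xin$. For the local branch condition, the point of the block-diagonal choice is that the row-$1$ function $S_1^i$ collapses exactly to the $S$-function of the $\{p\}$-certificate (every coefficient of $w_j-p_j$ for $j\geq 2$ is killed by a zero entry of $\lambda^i$), while each row $1+l$ collapses exactly to the corresponding $S$-function of the $\mathcal{Q}$-certificate; both lie in $\K$ by hypothesis. The property condition is handled the same way: with $\nu=(\nu^p,0,\dots,0)$ the function $S^\kappa$ reduces to $\alpha^p-\kappa-\nu^p(w^p-p)$, which is in $\K$ because $\{p\}$ is $\K$ \well w.r.t.\ the \emph{same} property $\prop{\kappa}$ (so inheriting either of the two property certificates would do).

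There is no genuine obstacle here; the only thing requiring care is bookkeeping. I expect the verification of condition~2 to be the step where the decoupling must be spelled out: one has to check that the zero cross-entries really separate the $p$-row from the $q$-rows, so that each branch inequality reduces to one already assumed to hold. Once the padding by $0$ is justified and the block structure is in place, the argument is routine, exactly as in the induction step of Proposition~\ref{onlyone}.
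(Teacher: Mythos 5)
Your proposal is correct and takes essentially the same route as the paper's proof: a joint certificate built from a block-diagonal $\lambda$ with zero cross-entries and stacked $\mu$, $\gamma$ multipliers, so that each row of the branch conditions collapses to one of the two assumed certificates. The only differences are cosmetic: the paper reduces to $\mathcal{Q}=\{q\}$ by induction where you handle general $\mathcal{Q}$ in one block, and for the property condition the paper averages the two certificates (taking $\alpha=(\alpha_1+\alpha_2)/2$ and $\nu=(\nu_1/2,\nu_2/2)$, invoking closure of $\K$ under nonnegative scaling) where you inherit the certificate of $\{p\}$ alone and pad $\nu$ with zeros, both of which are valid.
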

\begin{proof}
By induction, it suffices to prove the result for $\mathcal{Q}=\{q\}$. We write
$p_1=p$ and $p_2=q$. By definition, for $l=1,2$, there exist $w_l\in\rr$, $\alpha_l\in\rr$, $\nu\in\K^k$ and for all $i\in\ind$ $\lambda_l^i\in \K,\mu_l^i\in\K^{1\times n_i},\gamma_l^i\in\K^{1\times n_0}$ such that
$S_{l}^i, S_{l}^\kappa\in\K$ ($S_l^i\in\K$ and $S^\kappa$ defined at Equation~\eqref{auxiliaryineq}) and 
$w_l\geq \sup\{p_l(x)\mid x\in \xin\}$. It follows that $\{p,q\}$ is $\K$ \well w.r.t. $\pws$ and $\prop{\kappa}$ by taking
$(w_1,w_2)\in\rr^2$, $\alpha=(\alpha_1+\alpha_2)/2\in\rr$, ($\nu_1/2,\nu_2/2)\in\K^2$ and for all $i\in\ind$, $\{(\lambda_1^i,0),(0,\lambda_2^i)\}\in\K^{2\times 2}$, 
$(\mu_1^i,\mu_2^i)\in\K^{2\times n_i},(\gamma_1^i,\gamma_2^i)\in\K^{2\times n_0}$ (following
the order of the parameters in Definition~\ref{kwell}). 
To conclude, we use the fact that $\K$ is closed under nonnegative scalar multiplications. \qed
\end{proof} 
\subsection{Practical computation using sum-of-squares programming}
\label{sos}
Let $\rr[x]$ stands for the set of $d$-variate polynomials and $\rr_{2m}[x]$ be its subspace of polynomials of degree at most $2 m$. 
We instantiate $\mathcal{K}$ by the cone of sum-of-squares (SOS), that is
$\mathcal{K} = \Sigma[x] := \Bigl\{\,\sum_i q_i^2, \, \text{ with } q_i \in \rr[x] \Bigr\}$.

In the sequel, we assume that the data of the CPDS representation $\pws$ of some analyzed program  are polynomials, that is for all $j=1,\ldots,n_0$, $r_j^{\mathrm{in}}\in\rr[x]$, for all $j=1,\ldots,n_0$, $r_j^0\in\rr[x]$, for all $i\in\ind$, $T^i\in\rr[x]$ and for all $j=1,\ldots, n_i$, $r_j^i\in \rr[x]$. We look for a single polynomial template $p \in \rr_{2m}[x]$ ($k = 1$) such that the basis $\{p\}$ is $\Sigma[x]$ \well w.r.t. $\pws$ and $\prop{\kappa}$, thus satisfies the three conditions of Definition~\ref{kwell}.
One way to strengthen the three conditions of Definition~\ref{kwell} is to take  $\lambda^i = 1$, for all $i\in\ind$, $\nu = 1, \alpha = w$, then to consider the following {\em hierarchy} of SOS constraints, parametrized by the integer $m$:
\begin{equation}
\label{eq:singlesos}
\left\{
\begin{array}{l}
\displaystyle{w -p(x)+\sum_{j=1}^{n_{\mathrm{in}}} \sigma_j(x) r_j^{\mathrm{in}}(x) = \sigma_0(x)} \enspace , \\
\forall\, i\in\ind,\ \displaystyle{-p(T^i(x))+ p(x)+\sum_{j=1}^{n_i}\mu_j^i(x) r_j^i(x)+\sum_{j=1}^{n_0}\gamma_j^i(x) r_j^0(x) = \sigma^i(x)}  \enspace , \\
\displaystyle{-\kappa(x) + p(x) = \psi(x) } \enspace , \\
\\
p \in \rr_{2 m}[x] \enspace, w \in \rr \enspace, \\
\sigma_0 \in \Sigma[x] \enspace , \ \deg \sigma_0 \leq 2 m \enspace ,\\
\forall\, j = 1, \dots, n_{\mathrm{in}}\enspace ,\ \sigma_j \in \Sigma[x] \enspace ,\ \deg (\sigma_j g_j) \leq 2 m  \enspace, \\
\\
\forall\, i\in\ind \enspace ,\ \sigma^i\in \Sigma[x]\enspace ,\ \deg (\sigma^i) \leq 2 m \deg T^i \enspace,\\
\forall\, i\in\ind \enspace ,\ \forall\, j=1,\ldots, n_i \enspace,\ \mu_j^i\in\Sigma[x]\enspace ,\ \deg (\mu_j^i r_j^i)  \leq 2 m \deg T^i \enspace ,\\
\forall\, i\in\ind\enspace ,\ \forall\, j=1,\ldots, n_0\enspace ,\ \gamma^i\in \Sigma[x]\enspace ,\ \deg (\gamma_j^i r_j^0)  \leq 2 m \deg T^i  \enspace ,\\
\\
\psi \in \Sigma[x] \enspace,\ \deg (\psi) \leq 2 m  \enspace .\\
\end{array}
\right.
\end{equation}
%
For an integer $m$, we denote by $\cont_m$ the set of constraints on the decision variables $w,p,\sigma^0,\{\sigma_j, j= 1, \dots, n_{\mathrm{in}}\}, \{\sigma^i, i\in\ind\},\{\mu_j^i,i\in\ind,j=1,\ldots,n_i\},\{\gamma_j^i,i\in\ind,j=1,\ldots,n_0\}$ and $\psi$ depicted at Equation~\eqref{eq:singlesos}. 

As objective function, we choose to minimize $w$. The intuition behind this choice is that $w$ is enforced to be equal to $\alpha$ which defines the level for which $\{x\in\rd\mid \kappa(x)\leq \alpha\}$ is an invariant of the program associated to the CPDS $\pws$. When $\kappa$ is the norm, a minimal value $w$ (and thus $\alpha$) would be the smallest computable bound on the norm of the state variable $x_k$. Thus we synthetize a polynomial template of degree at most $2m$ by solving the following minimization problem:
\begin{equation}
\label{polsynthesis}
\inf\left\{ w\in\rr \left| 
\left(\begin{array}{l}
w,p,\sigma^0,\{\sigma_j, j= 1, \dots, n_{\mathrm{in}}\},\\
\{\sigma^i, i\in\ind\},\{\mu_j^i,i\in\ind,j=1,\ldots,n_i\},\\
 \{\gamma_j^i,i\in\ind,j=1,\ldots,n_0\},\psi) 
\end{array}\right)\in\cont_m \right\}\right. \,.
\end{equation}
        
Hence, computing the polynomial template $p \in \rr_{2m}[x]$ boils down to solving an SOS minimization problem. 
 From an optimal solution of Program~\eqref{polsynthesis}, one can extract the polynomials $\sigma_0, \sigma_1, \dots, \sigma_{n_{\mathrm{in}}}, \psi \in \Sigma[x]$ and for all $i\in\ind$, the polynomials $\mu^i, \gamma^i, \sigma^i\in\Sigma[x]$, which are called {\em SOS certificates}.  In practice, one can use the Matlab toolbox {\sc Yalmip} \cite{YALMIP}, which includes a high-level parser for nonlinear optimization and has a built-in module for such SOS calculations. {\sc Yalmip} reduces SOS programming to semidefinite programming (SDP) (see e.g.~\cite{Vandenberghe94semidefiniteprogramming} for more details about SDP), which in turn can be handled with efficient SDP solvers, such as {\sc Mosek}~\cite{mosek}.
In our setting the choice $\alpha = w$ avoids numerical issues while solving SDP programs.
\if{
\begin{definition}[SOS \well template basis]
A template basis $\pp=\{p_1,\ldots,p_k\}$ is SOS \well template basis w.r.t. 
$Pr(C,r,s,T^i,T^e)$ iff $\pp$ is composed by polynomials, there exist $w \in\rr^k$, $\alpha \in \rr^{n_\kappa}$
$\lambda^i,\lambda^e\in \Sigma_m[x]^{k\times k}$, $\mu^i,\mu^e,\gamma^i,\gamma^e\in \Sigma_m[x]^{k}$, 
$\sigma\in \Sigma_m[x]^{k\times n}$ and $\nu\in \Sigma_m[x]^{t\times k}$ such that:
\[
\forall \, l=1,\ldots k, \, w_l-p_l(x)+\sum_{j=1}^{n_C} \sigma_{l,j}(x) g_j(x) \in \Sigma_m[x]
\] 
and
\[
\forall \, l=1,\ldots k, \, w_l -\sum_{j=1}^k \lambda_{l,j}^i(x) (w_j-p_j(x))-p_l(T^i(x))+\mu_l^i(x)r(x)+\gamma_l^i(x)s(x)\in \Sigma_m[x]
\]
and
\[
\forall \, l=1,\ldots k, \, w_l -\sum_{j=1}^k \lambda_{l,j}^e(x) (w_j-p_j(x))-p_l(T^e(x))+\mu_l^e(x)r(x) - \gamma_l^e(x)s(x)\in \Sigma_m[x]
\]
and 
\[\alpha -\kappa(x) -\sum_{t=1}^k \nu_{t}(x) (w_t-p_t(x))\in \Sigma_m[x].\]
\end{definition}
}\fi

\if{
\begin{problem}
\label{eqtemplatefinalSOSk}
For a fixed positive integer $m$, find $P \in \rr_{2 m}[x], \gamma_2^i\in\rr_+$ such that:
\begin{enumerate}
\item  $P -\norm{\cdot}_q^q \in \Sigma_{m}[x]$ ;
\item $\forall\, i\in\affect,\ P-P \circ T^i \in \Sigma_{m}[x]$ ;
\item $\forall\, i\in\inter,\ P - P \circ T^i +\gamma_2^i r \in \Sigma_{m}[x]$ ;
\end{enumerate}
with $k := \max \{\bigl\lceil \dfrac{q}{2} \bigr\rceil, \bigl\lceil \dfrac{\deg r}{2} \bigr\rceil,  \max_{i \in \affect \cup  \inter} \{ k' \bigl\lceil \dfrac{\deg T^i}{2} \bigr\rceil \} \}$.
\end{problem}
}\fi
\paragraph*{Computational considerations} Define $t := \max\{\deg T^i, i\in\ind \}$. At step $m$ of this hierarchy, the number of SDP variables is proportional to $\binom{d + 2 m t}{d}$ and the number of SDP constraints is proportional to $\binom{d + m t}{d}$. Thus, one expects tractable approximations when the number $d$ of variables (resp. the degree $2 m$ of the template $p$) is small. However, one can handle bigger instances of Problem~\eqref{polsynthesis} by taking into account the system properties. For instance one could exploit sparsity as in~\cite{Waki06sumsof} by considering the variable sparsity correlation pattern of the polynomials $\{T^i,i\in\ind\},\{r_j^i, i\in\ind, j=1,\ldots,n_i\},\{r_j^0,j=1,\ldots,n_0\},\{r_j^{\mathrm{in}},j=1,\ldots,n_{\mathrm{in}}\}$ and $\kappa$. 

Recall that $\rea(\pws)$ is the set of possible values taken by the CPDS $\pws$, which are also the possible values taken by the variables of the program represented by $\pws$.
\begin{proposition}
\label{outer}
Assume that step $m$ of Problem~\eqref{polsynthesis} yields a feasible solution and denote by $p^{(m)} \in \rr_{2m}[x]$ (resp. $w^{(m)}$) the polynomial  template (resp. the upper bound of $p^{(m)}$ over $\xin$) associated to this solution. Let $v(p^{(m)})=w^{(m)}$ and thus $v^\star := \{x \in \rr^d \mid p^{(m)}(x) \leq w^{(m)}\}$. 
Then $\rea(\pws) \subseteq v^\star$ and $x\in v^\star\implies \kappa(x)\leq w^{(m)}$.
\end{proposition}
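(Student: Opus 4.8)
The plan is to observe that a feasible point of the constraint set $\cont_m$, under the fixed choices $\lambda^i=1$ (for all $i\in\ind$), $\nu=1$ and $\alpha=w$, is nothing but an instance of the three conditions of Definition~\ref{kwell} with $\K=\Sigma[x]$ and $k=1$. Once this identification is made, the statement follows by chaining the already-proved results: Theorem~\ref{soswell} (reading off its constructive proof), Proposition~\ref{inductive}, and Proposition~\ref{basicresults}. So the work is essentially a matter of matching the SOS equalities of \eqref{eq:singlesos} to the abstract membership conditions of \eqref{auxiliaryineq}.

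First I would unpack the feasible solution $(w^{(m)},p^{(m)},\sigma_0,\{\sigma_j\},\{\sigma^i\},\{\mu^i_j\},\{\gamma^i_j\},\psi)$. Each equation in \eqref{eq:singlesos} asserts that its left-hand side equals an SOS polynomial, hence lies in $\Sigma[x]=\K\subseteq\FRP$. Comparing with Equation~\eqref{auxiliaryineq}: putting $\lambda^i=1$ collapses $S_1^i$ to $p(x)-p(T^i(x))+\sum_{j}\mu^i_j r^i_j+\sum_j\gamma^i_j r^0_j$, which is precisely the second equation, so $S^i_1=\sigma^i\in\K$; putting $\nu=1$ and $\alpha=w$ collapses $S^\kappa$ to $p(x)-\kappa(x)$, which is the third equation, so $S^\kappa=\psi\in\K$. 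For the initial condition, the first equation gives $w-p(x)+\sum_j\sigma_j(x) r_j^{\mathrm{in}}(x)\in\Sigma[x]\subseteq\FRP$; since $\xin\subseteq\{x\mid r^{\mathrm{in}}(x)\leq 0\}$ and the multipliers $\sigma_j$ are nonnegative, Lemma~\ref{funslemma} yields $p(x)\leq w$ on $\xin$, that is $w\geq\sup_{y\in\xin}p(y)$. Hence $\{p^{(m)}\}$ is $\Sigma[x]$ \well w.r.t. $\pws$ and $\prop{\kappa}$. Now I follow the constructive proof of Theorem~\ref{soswell} and set $v(p^{(m)}):=w^{(m)}$, so that $v^\star=\{x\mid p^{(m)}(x)\leq w^{(m)}\}$. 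The initial-condition inequality is equivalent to $\xin\subseteq v^\star$, and each $S_1^i\in\K\subseteq\FRP$ gives, by Lemma~\ref{funslemma}, the implication $x\in v^\star\wedge r^i(x)\leq 0\wedge r^0(x)\leq 0\implies T^i(x)\in v^\star$. These are exactly the hypotheses of Proposition~\ref{inductive}, so $v\in\feas{\pws}$, and Proposition~\ref{basicresults}(2) then gives $\rea(\pws)\subseteq v^\star$, the first assertion. For the second, $S^\kappa=p^{(m)}(x)-\kappa(x)\in\K\subseteq\FRP$ means $\kappa(x)\leq p^{(m)}(x)$ for all $x$, so $x\in v^\star$ forces $\kappa(x)\leq p^{(m)}(x)\leq w^{(m)}$, which is the claimed implication (equivalently $v^\star\subseteq\{x\mid\kappa(x)\leq w^{(m)}\}\subseteq\prop{\kappa}$).

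The only genuinely delicate point is the bookkeeping of the first steps: one must check carefully that the substitutions $\lambda^i=1$, $\nu=1$, $\alpha=w$ really turn the abstract membership conditions $S_1^i\in\K$ and $S^\kappa\in\K$ of Definition~\ref{kwell} into the SOS equalities of \eqref{eq:singlesos}, and that the listed degree bounds remain consistent after these substitutions (they do, since the substituted multipliers are constants). Everything beyond this identification is a direct citation of Lemma~\ref{funslemma}, Theorem~\ref{soswell}, and Propositions~\ref{inductive} and~\ref{basicresults}, so no further estimates are required.
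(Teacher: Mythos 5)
Your proof is correct and follows essentially the same route as the paper: identify a feasible point of $\cont_m$ (with $\lambda^i=1$, $\nu=1$, $\alpha=w$) as witnessing that $\{p^{(m)}\}$ is $\Sigma[x]$ \well w.r.t. $\pws$ and $\prop{\kappa}$, then invoke Theorem~\ref{soswell} (via Lemma~\ref{funslemma} and Proposition~\ref{inductive}) and Proposition~\ref{basicresults}. The only difference is that you unroll the proof of Theorem~\ref{soswell} explicitly where the paper cites it, which makes the argument more self-contained but is not a different approach.
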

\begin{proof}
As a consequence of the first equality constraint of Problem~\eqref{eq:singlesos}, one has $w^{(m)} \geq \sup_{x \in \xin} p^{(m)}(x)$. Then, the finite template basis $\{p^{(m)}\}$ is $\Sigma[x]$ well-representative w.r.t. $\pws$ and $\prop{\kappa}$. 
By Theorem~\ref{soswell}, this basis is well-representative w.r.t. $\pws$ and $\prop{\kappa}$. In the proof of Theorem~\ref{soswell}, we also proved that $v\in\feas{\pws}$ and $v^\star\subseteq \prop{\kappa}$. Thus from the second statement of Proposition~\ref{basicresults}, $\rea(\pws) \subseteq v^\star$ and $\prop{\kappa}$ is sublevel invariant.   
\qed
\end{proof}
The next corollary follows directly from  Proposition~\ref{twowell} and Proposition~\ref{outer}.
\begin{corollary}
\label{outerbasis}
Given some integers $k$ and $m$, assume that steps $m, \dots, m + k$ of Problem~\eqref{polsynthesis} yield respective feasible polynomial solutions $p^{(m)}, \dots, p^{(m + k)}$. Then, $\{ p^{(m)}, \dots, p^{(m + k)} \}$ is a $\Sigma[x]$ \well template basis w.r.t. $\pws$ and $\prop{\kappa}$.
\end{corollary}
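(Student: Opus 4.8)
The plan is to derive the statement from two facts already proved: Proposition~\ref{outer}, which converts a single feasible SOS solution into a singleton well-representative basis, together with Proposition~\ref{twowell}, which lets me merge well-representative bases. I would therefore not re-inspect the SOS constraints of Problem~\eqref{polsynthesis} directly; instead I would treat each $p^{(j)}$ as a black box already certified by Proposition~\ref{outer}, and then glue the $k+1$ resulting singletons together.

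First I would fix an index $j\in\{m,\ldots,m+k\}$ and note that, by hypothesis, $p^{(j)}$ together with its accompanying SOS multipliers is a feasible point of Problem~\eqref{polsynthesis} at step $j$. Following the proof of Proposition~\ref{outer}, the first equality constraint of Problem~\eqref{eq:singlesos} yields $w^{(j)}\geq \sup_{x\in\xin}p^{(j)}(x)$, while the remaining equalities are exactly the three conditions of Definition~\ref{kwell} with $\K=\Sigma[x]$, $\lambda^i=\nu=1$ and $\alpha=w^{(j)}$. Hence the singleton $\{p^{(j)}\}$ is a $\Sigma[x]$ \well template basis w.r.t. $\pws$ and $\prop{\kappa}$. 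Since Proposition~\ref{outer} is stated for an arbitrary feasible step, this holds for every $j$ in the range, producing $k+1$ singleton well-representative bases.

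Next I would assemble the full basis by induction on $k$ using Proposition~\ref{twowell}. The base case $k=0$ is precisely the previous paragraph. For the inductive step, assuming $\{p^{(m)},\ldots,p^{(m+k-1)}\}$ is already $\Sigma[x]$ \well, I would apply Proposition~\ref{twowell} with the singleton $\{p^{(m+k)}\}$ playing the role of $\{p\}$ and $\mathcal{Q}=\{p^{(m)},\ldots,p^{(m+k-1)}\}$; its conclusion is that the union $\{p^{(m)},\ldots,p^{(m+k)}\}$ is $\Sigma[x]$ \well w.r.t. $\pws$ and $\prop{\kappa}$, which is the claim.

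I expect the only subtlety --- and it is not a genuine obstacle --- to sit in the first paragraph: recognizing that \emph{feasibility at step $j$} already encodes, verbatim, the defining conditions of a $\Sigma[x]$ well-representative singleton, so that Proposition~\ref{outer} applies unchanged at each index rather than only at the distinguished step $m$ appearing in its statement. Once this is granted, the merging step is a routine finite induction in which Proposition~\ref{twowell} does all the work, so no new estimates or SOS certificates need to be constructed.
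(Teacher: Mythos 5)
Your proposal is correct and matches the paper's own argument: the paper states that the corollary ``follows directly from Proposition~\ref{twowell} and Proposition~\ref{outer}'', which is exactly your route --- each feasible step yields a singleton $\Sigma[x]$ well-representative basis (as established in the proof of Proposition~\ref{outer}), and these singletons are merged by a finite induction via Proposition~\ref{twowell}. Your explicit unfolding of the induction and of why feasibility at an arbitrary step $j$ (not just the distinguished $m$) gives the singleton property is a faithful elaboration of what the paper leaves implicit.
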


\section{Benchmarks}
Here, we perform some numerical experiments while solving Problem~\eqref{polsynthesis} (given in Section~\ref{sos}) on several examples. 
In Section~\ref{benchbound}, we verify that the program of Example~\ref{running} satisfies some boundedness property. We also provide examples involving higher dimensional cases. Then, Section~\ref{benchsafe} focuses on checking that the set of variable values avoids an unsafe region. Numerical experiments are performed on an Intel Core i5 CPU ($2.40\, $GHz) with {\sc Yalmip} being interfaced with the SDP solver {\sc Mosek}. For the sake of simplicity, we write $w_m^{\star}$ instead of $v(p^{(m)})=w^{(m)}$.
\label{bench}
\subsection{Checking boundedness of the set of variables values}
\label{benchbound}
\begin{example}
\label{ex:test}
Following Example~\ref{running}, we consider the constrained piecewise discrete-time dynamical system $\pws=(\xin,X^0,\{X^1,X^2\},\{T^1,T^2\})$  with $\xin = [0.9, 1.1] \times [0, 0.2] $, $X^0=\{x\in\rr^2\mid r^0(x)\leq 0\}$ with $r^0:x\mapsto -1$, 
$X^1=\{x\in\rr^2\mid r^1(x)\leq 0\}$ with $r^1:x\mapsto \norm{x}^2-1$, $X^2=\{x\in\rr^2\mid r^2(x)<0\}$
with $r^2=-r^1$ and $T^1:(x_1,x_2)\mapsto (c_{11}x_1^2+c_{12}x_2^3,c_{21}x_1^3+c_{22}x_2^2)$, 
 $T^2:(x_1,x_2) \mapsto (d_{11}x_1^3+d_{12}x_2^2,d_{21}x_1^2+d_{22}x_2^2)$. We are interested in proving the boundedness property which a sublevel property $\prop{\kappa}$ with $\kappa : x \mapsto \| x \|_2^2$. 
\end{example}
\begin{figure}[!ht]
\centering
\subfigure[$m = 3$]{
\includegraphics[scale=\sizesmallfig]{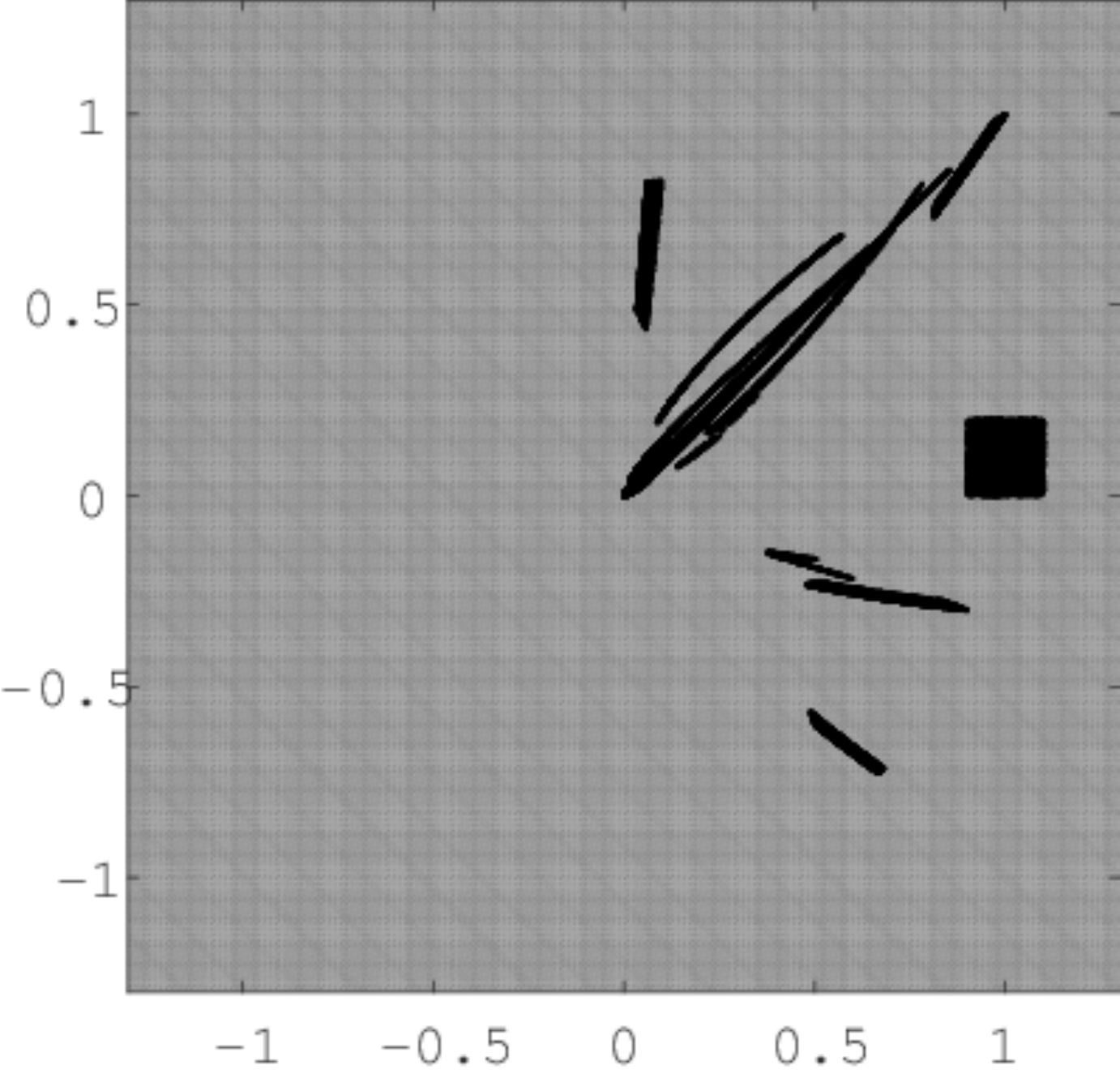}}
\subfigure[$m = 4$]{
\includegraphics[scale=\sizesmallfig]{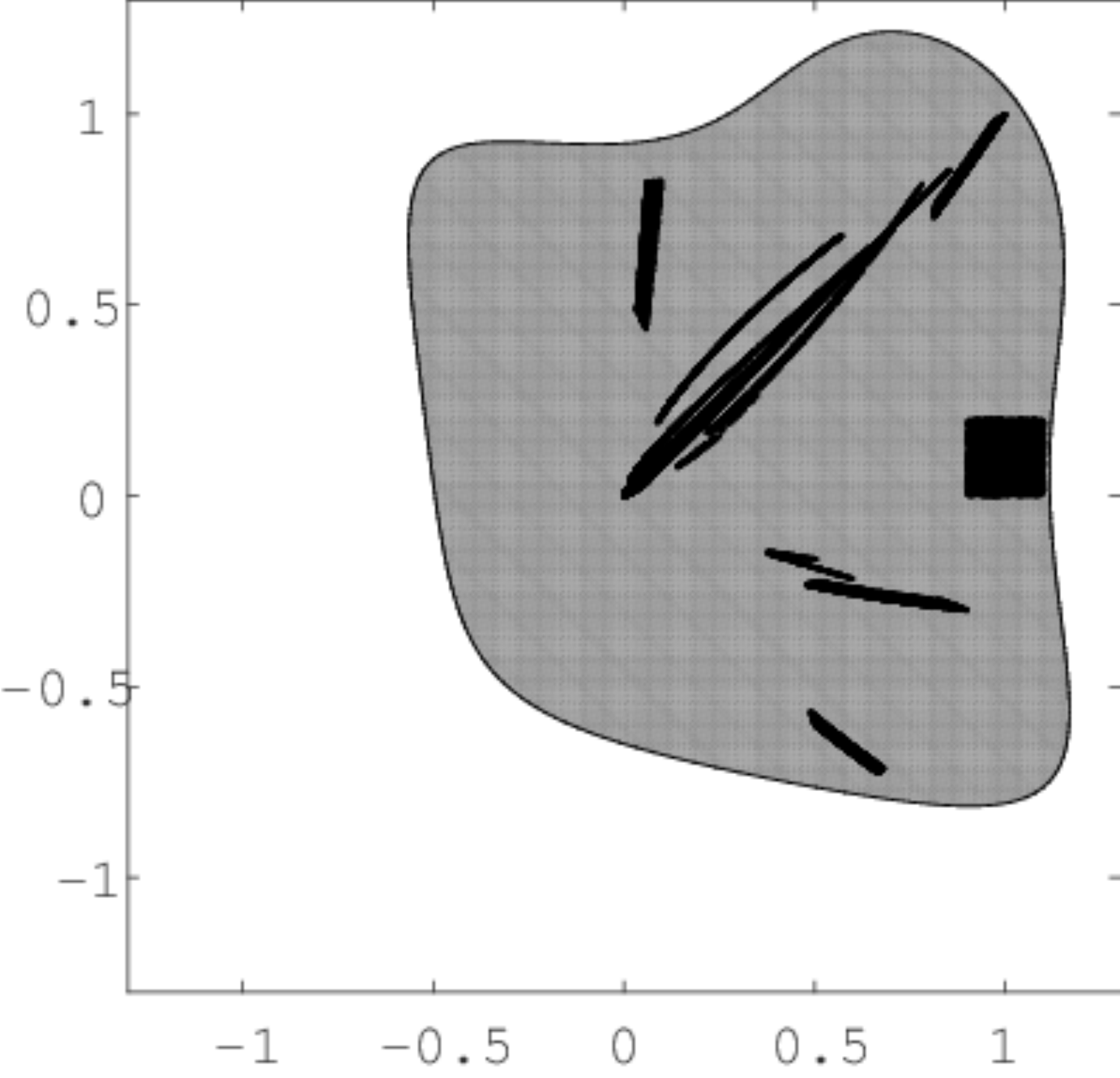}}
\subfigure[$m = 5$]{
\includegraphics[scale=\sizesmallfig]{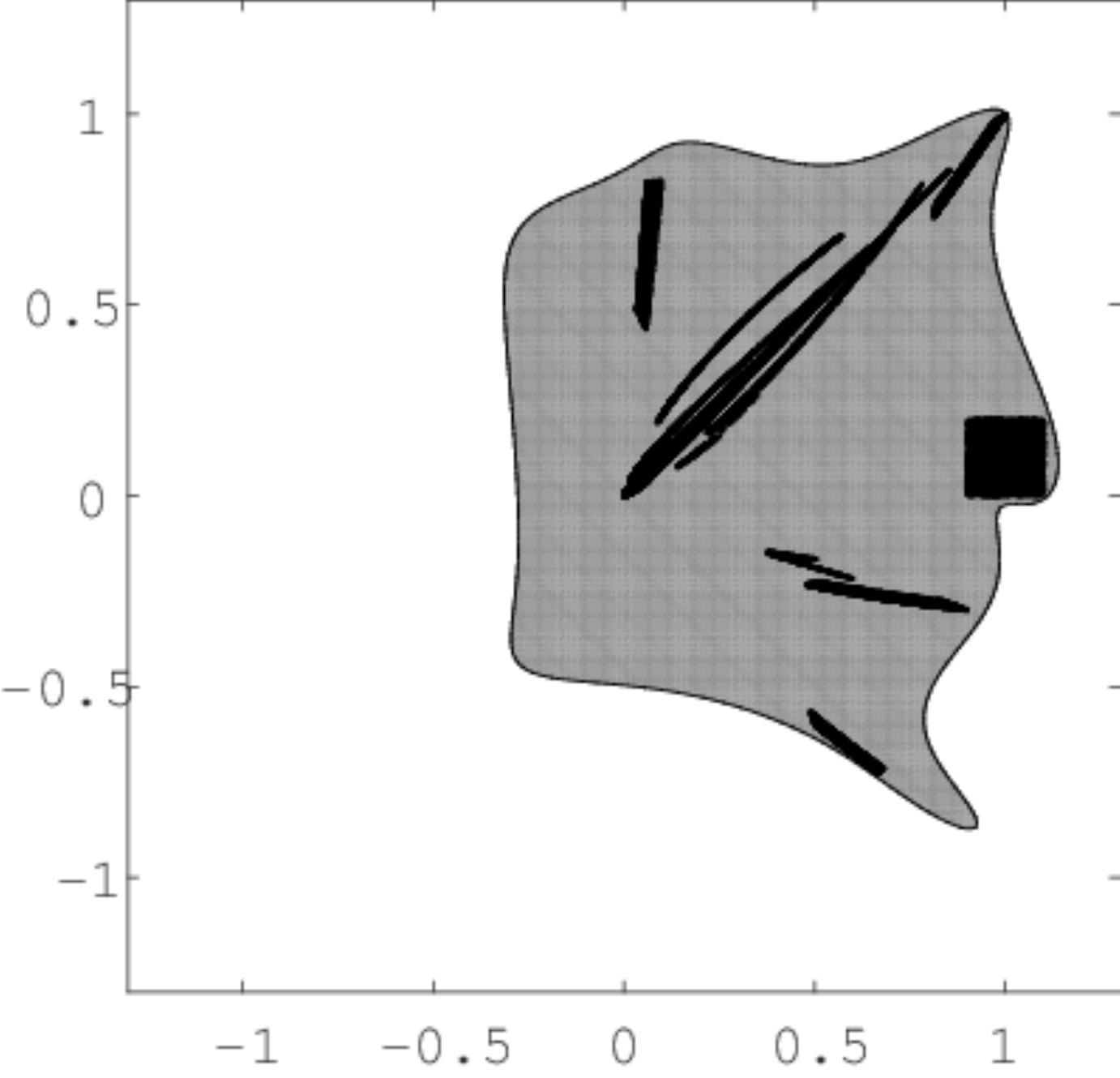}}
\caption{A hierarchy of sublevel sets $w_m^{\star}$ for Example~\ref{ex:test}}	
\label{fig:test}
\end{figure}
Here we illustrate the method by instantiating the program of Example~\ref{running} with the following input: $a_1 = 0.9$, $a_2 = 1.1$, $b_1 = 0$, $b_2 = 0.2$, $c_{11} = c_{12} = c_{21} = c_{22} = 1$,  $d_{11} = 0.5$,  $d_{12} = 0.4$, $d_{21} = -0.6$ and $d_{22} = 0.3$.
We represent the possible initial values taken by the program variables $(x_1, x_2)$ by picking uniformly $N$ points $(x_1^{(i)}, x_2^{(i)}) \ (i = 1, \dots, N)$  inside the box $\xin = [0.9, 1.1] \times [0, 0.2]$ (see the corresponding square of dots on Figure~\ref{fig:test}). The other dots are obtained after successive updates of each point $(x_1^{(i)}, x_2^{(i)})$ by the program of Example~\ref{running}. The sets of dots in Figure~\ref{fig:test} are obtained with $N = 100$ and six successive iterations.

At step $m = 3$, Program~\eqref{polsynthesis} already yields a feasible solution, from which one can extract the polynomial template $p^{(3)}$ and $w_3 \in \feas{\pws}$.  The SOS certificates extracted from this solution guarantee the boundedness property, that is $x \in \rea(\pws)\implies x \in w_3^\star \implies \| x \|_2^2 \leq w^{(3)}$. 
Figure~\ref{fig:test} displays in light gray outer approximations of the set of possible values $X_1$ taken by the program of Example~\ref{ex:test} as follows: (a) the degree six sublevel set $w_3^\star$, (b) the degree eight sublevel set $w_4^\star$ and (c) the degree ten sublevel 
set $w_5^\star$. The outer approximation $w_3^\star$ is coarse as it contains the box $[-1.5, 1.5]^2$. However, solving Problem~\eqref{polsynthesis} at higher steps yields tighter outer approximations of $\rea(\pws)$ together with more precise bounds $w^{(4)}$ and $w^{(5)}$. Finally, $\{p^{(3)}, p^{(4)}, p^{(5)}\}$ is a $\Sigma[x]$ well-representative template basis w.r.t. to $\pws$ and $\prop{\|\cdot\|_2^2}$ for the program of Example~\ref{ex:test}.

We also succeeded to certify that the same property holds for higher dimensional programs, described in Example~\ref{ex:test3} ($d = 3$) and Example~\ref{ex:test4} ($d = 4$).
\begin{example}
\label{ex:test3}
Here we consider $\xin = [0.9, 1.1] \times [0, 0.2]^2 $, $r^0:x\mapsto -1$, 
$r^1: x \mapsto \| x \|_2^2-1$, $r^2=-r^1$, $T^1:(x_1,x_2,x_3)\mapsto 1/4 (0.8 x_1^2 + 1.4 x_2 - 0.5  x_3^2,   1.3  x_1 + 0.5  x_3^2,  1.4  x_2 + 0.8  x_3^2)$, 
 $T^2:(x_1,x_2, x_3) \mapsto 1/4(0.5  x_1 + 0.4  x_2^2,   -0.6  x_2^2 + 0.3  x_3^2,  0.5  x_3 + 0.4  x_1^2)$ and $\kappa : x \mapsto \| x \|_2^2$.
\end{example}
\begin{example}
\label{ex:test4}
Here we consider $\xin= [0.9, 1.1] \times [0, 0.2]^3 $, $r^0:x\mapsto -1$, 
$r^1: x \mapsto \| x \|_2^2-1$, $r^2=-r^1$, $T^1:(x_1,x_2,x_3, x_4)\mapsto 0.25 (0.8  x_1^2 + 1.4 x_2 - 0.5  x_3^2, 1.3  x_1 + 0.5,  x_2^2 - 0.8  x_4^2,         0.8  x_3^2 + 1.4 x_4,    1.3  x_3 + 0.5  x_4^2)$, 
 $T^2:(x_1,x_2, x_3, x_4) \mapsto 0.25 (0.5  x_1 + 0.4  x_2^2,   -0.6  x_1^2 + 0.3  x_2^2, 0.5  x_3 + 0.4  x_4^2,   -0.6  x_3 + 0.3  x_4^2)$ and $\kappa : x \mapsto \| x \|_2^2$.
\end{example}
Table~\ref{table:bench} reports several data obtained while solving Problem~\eqref{polsynthesis} at step $m$, ($2 \leq m \leq 5$), either for Example~\ref{ex:test}, Example~\ref{ex:test3} or Example~\ref{ex:test4}.
Each instance of Problem~\eqref{polsynthesis} is recast as an SDP program, involving a total number of ``Nb. vars'' SDP variables, with an SDP matrix of size ``Mat. size''. We indicate the CPU time required to compute the optimal solution of each SDP program with {\sc Mosek}.

The symbol ``$-$'' means that the corresponding SOS program could not be solved within one day of computation. These benchmarks illustrate the computational considerations mentioned in Section~\ref{sos} as it takes more CPU time to analyze higher dimensional programs.
Note that it is not possible to solve Problem~\eqref{polsynthesis} at step $5$ for Example~\ref{ex:test4}. A possible workaround to limit this computational blow-up would be to exploit the sparsity of the system.

%
\renewcommand{\tabcolsep}{0.4cm}
\begin{table}[!ht]
\begin{center}
\caption{Comparison of timing results for Example~\ref{ex:test},~\ref{ex:test3} and~\ref{ex:test4}}
\begin{tabular}{c|c|cccc}
\hline
\multicolumn{2}{c|}{Degree $2 m$}
& 4 & 6 & 8 & 10
\\
\hline  
\multirow{2}{*}{Example \ref{ex:test}} & Nb. vars &  1513 & 5740 & 15705 & 35212 \\
& Mat. size & 368 & 802 & 1404 & 2174 \\
 ($d = 2$) & Time & $0.82 \, s$ & $1.35 \, s$ &  $4.00 \, s$ & $9.86 \, s$\\
\hline
\multirow{2}{*}{Example \ref{ex:test3}} & Nb. vars &  2115 & 11950 & 46461 & 141612\\
& Mat. size & 628 & 1860 & 4132 & 7764 \\
 ($d = 3$)& Time & $0.84 \, s$ & $2.98 \, s$ &  $21.4 \, s$ & $109 \, s$\\
\hline
\multirow{2}{*}{Example \ref{ex:test4}} & Nb. vars & 7202  & 65306 & 18480 & $-$\\
    & Mat. size & 1670 & 6622 & 373057 & $-$\\
 ($d = 4$) & Time & $2.85 \, s$ & $57.3 \, s$ &  $1534 \, s$ & $-$\\
\hline
\end{tabular}
\label{table:bench}
\end{center}
\end{table}

\subsection{Avoiding unsafe regions for the set of variables values}
\label{benchsafe}
Here we consider the program given in Example~\ref{ex:test4}. One is interested in showing that the set $X_1$ of possible values taken by the variables of this program does not meet the ball $B$ of center $(-0.5, -0.5)$ and radius $0.5$.
\begin{example}
\label{ex:testout}
Let consider the CPDS $\pws=(\xin,X^0,\{X^1,X^2\},\{T^1,T^2\})$ with $\xin = [0.5, 0.7] \times [0.5, 0.7] $, $X^0=\{x\in\rr^2\mid r^0(x)\leq 0\}$ with $r^0:x\mapsto -1$, $X^1=\{x\in\rr^2\mid r^1(x)\leq 0\}$ with $r^1:x\mapsto \| x \|_2^2-1$, $X^2=\{x\in\rr^2\mid r^2(x)\leq 0\}$ with $r^2=-r^1$ and $T^1:(x_1,x_2)\mapsto (x_1^2+ x_2^3,  x_1^3+  x_2^2)$, 
 $T^2:(x,y) \mapsto (0.5 x_1^3 + 0.4 x_2^2, - 0.6 x_1^2 + 0.3 x_2^2)$. With $\kappa : (x_1, x_2) \mapsto 0.25 - (x_1 + 0.5)^2 - (x_2 + 0.5)^2$, one has $B := \{ x \in \rr^2 \mid  0 \leq \kappa(x) \}$ and one shall prove that $x \in \rea(\pws) \implies \kappa(x) < 0$. Note that $\kappa$ is not a norm, by contrast with the previous examples.
\end{example}
At steps $m = 3, 4$, Program~\eqref{polsynthesis} yields feasible solutions with nonnegative bounds $w^{(3)}, w^{(4)}$. Hence, it does not allow to certify that $\rea(\pws) \cap B$ is empty. This is illustrated in both Figure~\ref{fig:testout} (a) and Figure~\ref{fig:testout} (b), where the light grey region does not avoid the ball $B$. However, solving the SOS feasibility program at step $m = 5$ yields a negative bound $w^{(5)}$ together with a certificate that $\rea(\pws)$ avoids the ball $B$ (see Figure~\ref{fig:testout} (c)). Finally, $\{p^{(5)}\}$ is a single polynomial template basis w.r.t. $\pws$ and $\prop{\kappa}$ with the restriction that $\{x\in\rd\mid p^{(5)}(x)\leq w^{(5)}\}\subseteq \{x\in\rd\mid \kappa(x)\leq \alpha\}$ for some $\alpha<0$ for the program of Example~\ref{ex:testout}.

%
\begin{figure}[!ht]
\centering
\subfigure[$m=3$]{
\includegraphics[scale=\sizesmallfig]{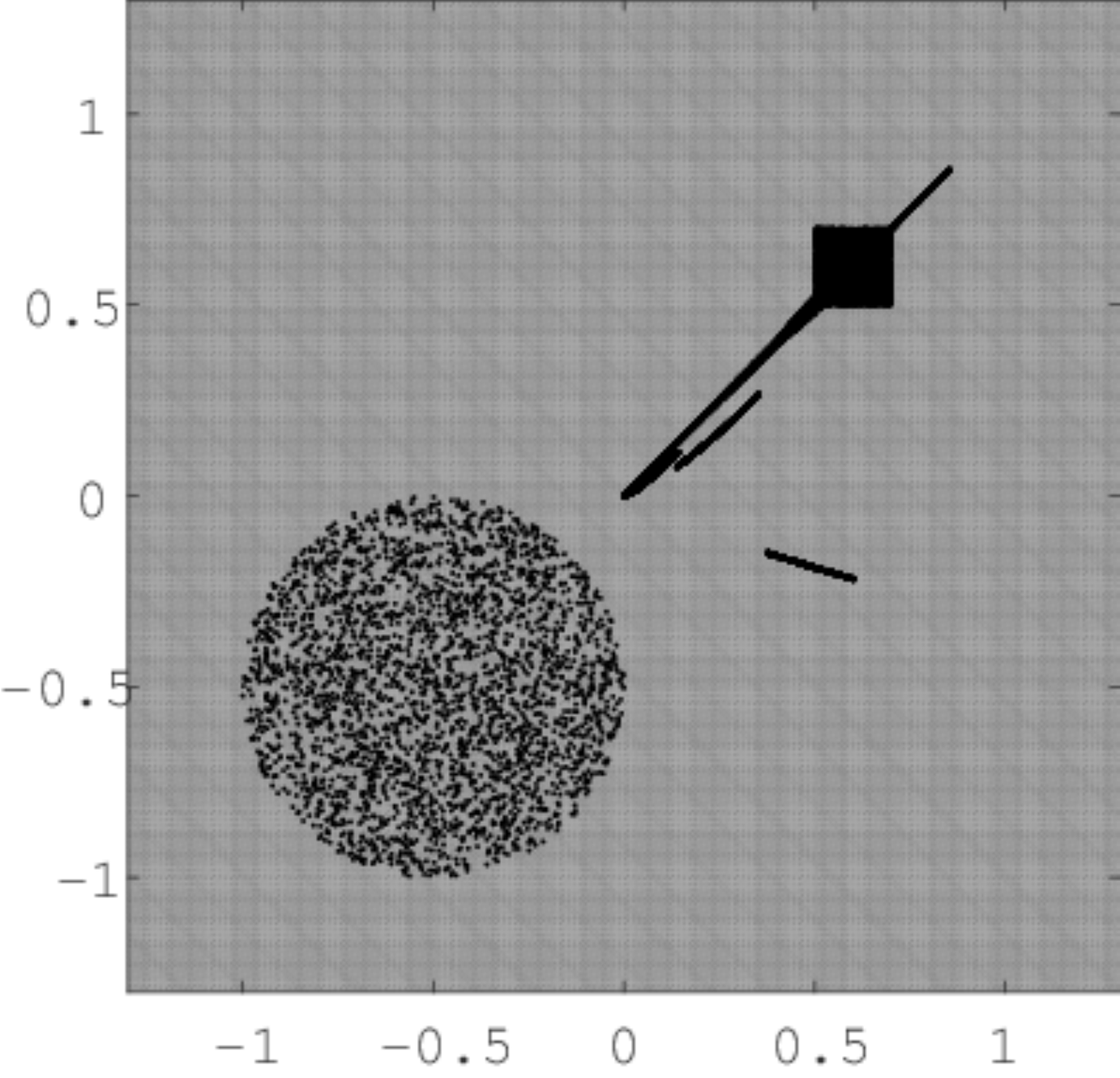}}
\subfigure[$m=4$]{
\includegraphics[scale=\sizesmallfig]{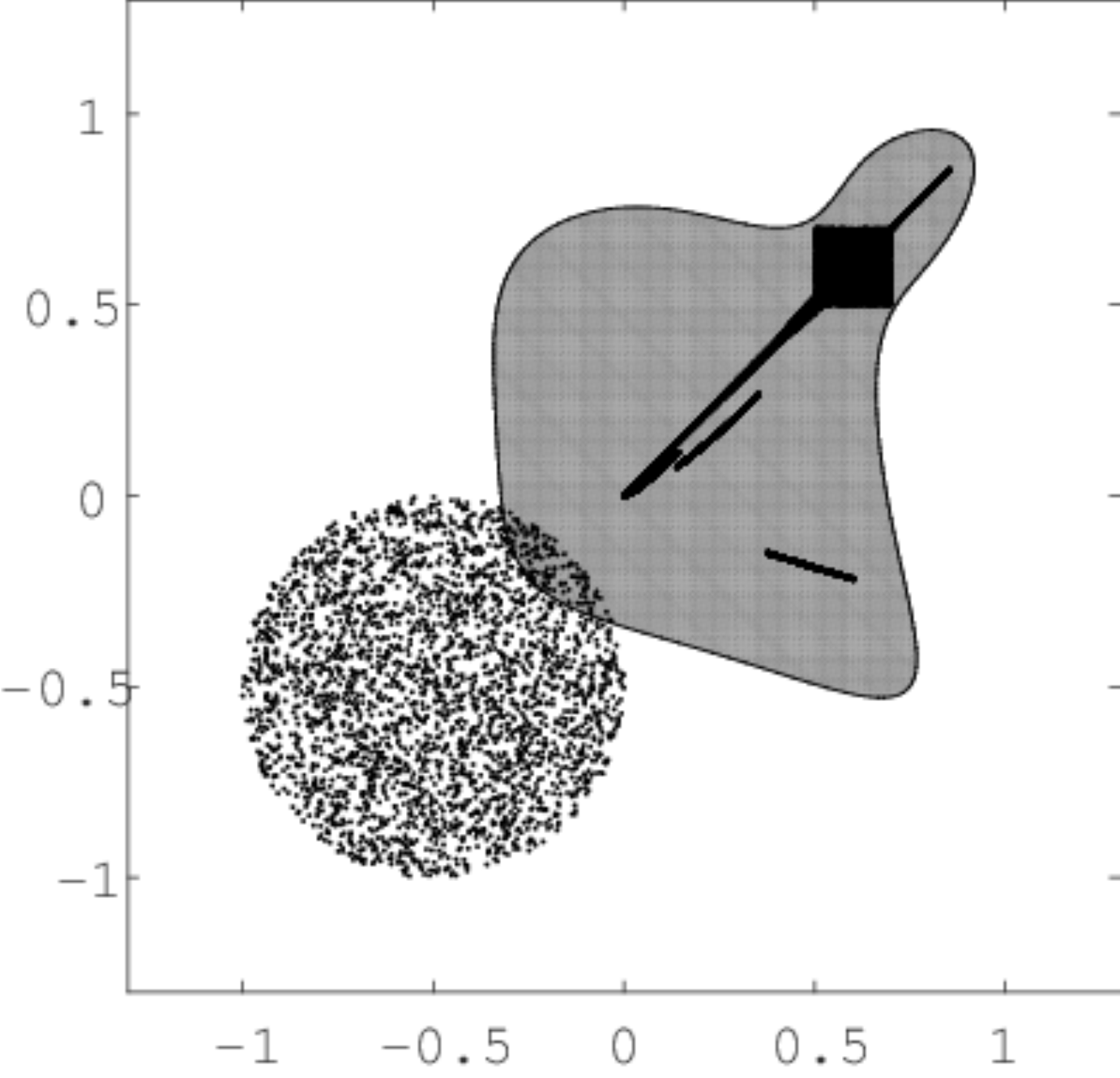}}
\subfigure[$m=5$]{
\includegraphics[scale=\sizesmallfig]{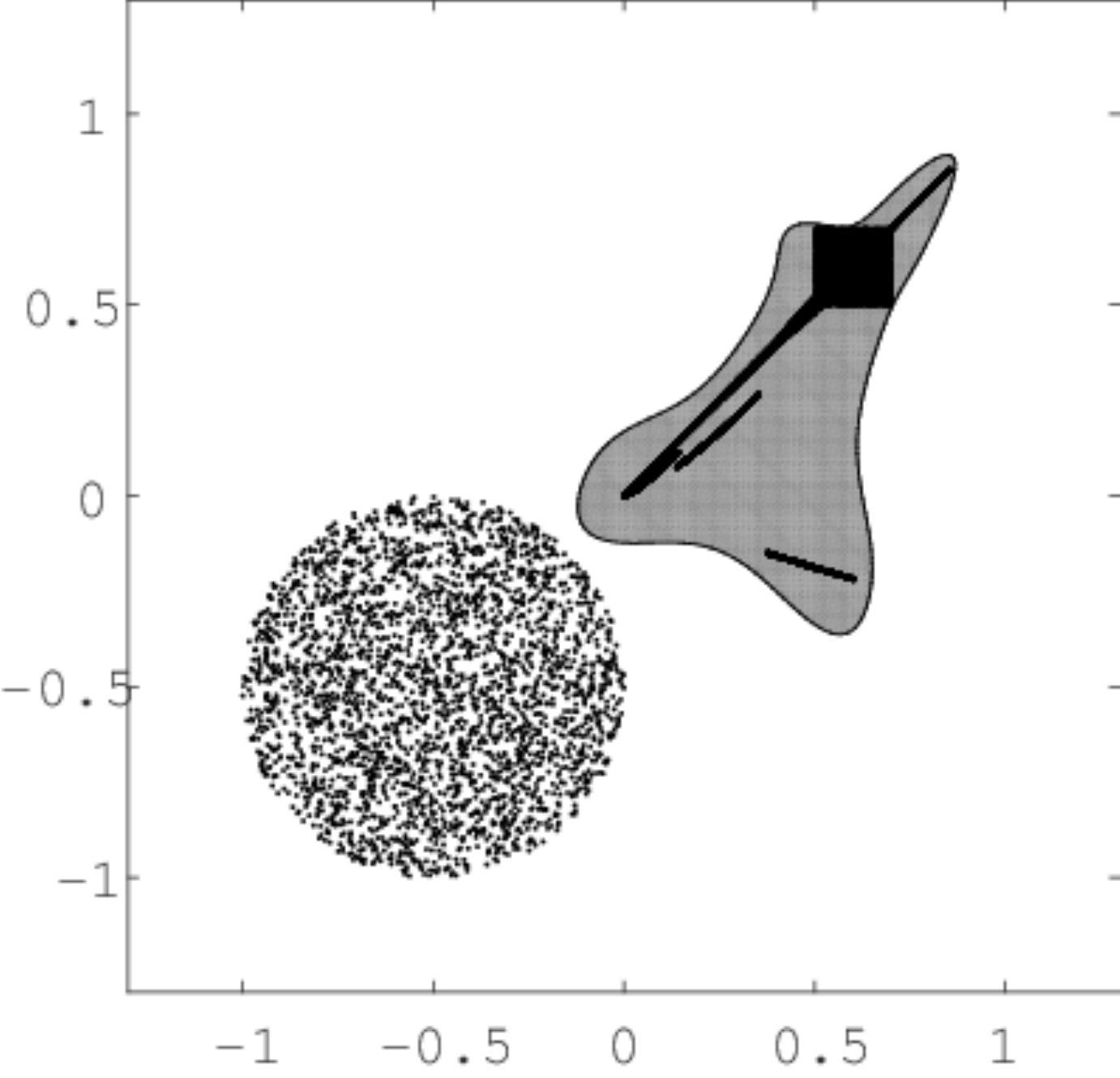}}
\caption{A hierarchy of sublevel sets $w_m^{\star}$ for Example~\ref{ex:testout}}	
\label{fig:testout}
\end{figure}

\section{Conclusion and Future Works}
\label{conclusion-future-works}
In this paper, we give a formal framework to relate the template generation 
problem to the property to prove on analyzed program : well-representative templates. 
We proposed a practical method to compute well-representative template bases in the case of polynomial arithmetic using sum-of-squares programming. This method is able to handle non trivial examples, as illustrated through the numerical experiments.

Topics of further investigation include refining the invariant bounds generated for 
a specific sublevel property, by applying the policy iteration algorithm. Such a refinement would be of particular interest if one can not decide whether the set of variables values avoids an unsafe region when the feasible invariant bound yields a negative value for $\alpha$. For the case of boundedness property, it would allow to decrease the value of the bounds on the variables. 
Finally, our method could be generalized to a larger class of programs, involving semialgebraic or transcendental assignments, by using the same polynomial reduction techniques as in~\cite{AGMW14nltemplates}.

\end{document}